\newcites{app}{References}
\theoremstyle{plain} 
\newtheorem{cor}{Corollary} 
\newtheorem{prop}{Proposition}
\newtheorem{claim}{Claim} 
\newtheorem{theorem}{Theorem}
\newtheorem{lemma}{Lemma}
\theoremstyle{definition} 
\newtheorem{ex}{Example}
\theoremstyle{remark} 
\newtheorem{rmk}{Remark} 
\let\emptyset\varnothing
\DeclareMathOperator{\E}{\mathds{E}}
\renewcommand{\P}{\mathds{P}}
\newcommand{\R}{\mathds{R}}
\newcommand{\N}{\mathds{N}}
\newcommand{\M}{\mathcal{M}}
\newcommand{\Y}{\mathcal{Y}}
\newcommand{\IC}{\text{IC}}
\newcommand{\IR}{\text{IR}}
\renewcommand{\1}{\mathds{1}}
\newcommand\eqid{\stackrel{d}{=}}
\renewcommand*\d{\mathop{}\!\mathrm{d}}
\newcommand{\X}{\mathcal{X}}
\theoremstyle{definition}
\crefname{manualasm}{assumption}{assumptions}
\crefname{cor}{corollary}{corollaries}
\crefname{claim}{claim}{claims}
\crefname{ex}{example}{examples}
\crefname{defn}{definition}{definitions}
\crefname{rmk}{remark}{remarks}
\begin{document}

\title{Costly Multidimensional Screening\thanks{I thank the coeditor, Andrea Galeotti, and the five anonymous referees for very thoughtful comments. I would also like to thank Mohammad Akbarpour, Gabriel Carroll, Daniel Chen, Piotr Dworczak, Mira Frick, Soheil Ghili, Nima Haghpanah, Jason Hartline, Andreas Haupt, Ravi Jagadeesan, Zi Yang Kang, Yingkai Li, Paul Milgrom, Mike Ostrovsky, Anne-Katrin Roesler, Ilya Segal, Eran Shmaya, Ludvig Sinander, Andy Skrzypacz, Takuo Sugaya, Bob Wilson, Ali Yurukoglu, Weijie Zhong, and several seminar and conference audiences for their helpful comments and suggestions.}
}
\author{Frank Yang\thanks{Department of Economics, Harvard University. Email: \href{mailto:fyang@fas.harvard.edu}{fyang@fas.harvard.edu}.}}
\date{\today}
\maketitle
\begin{abstract}
A screening instrument is \textit{\textbf{costly}} if it is socially wasteful and \textit{\textbf{productive}} otherwise. A principal screens an agent with multidimensional private information and quasilinear preferences that are additively separable across two components: a one-dimensional productive component and a multidimensional costly component. Can the principal improve upon simple one-dimensional mechanisms by also using the costly instruments? We show that if the agent has preferences between the two components that are positively correlated in a suitably defined sense, then simply screening the productive component is optimal. The result holds for general type and allocation spaces, and allows for nonlinear and interdependent valuations. We discuss applications to monopoly pricing, bundling, and labor market screening. 
\\

\noindent\textbf{Keywords:} Multidimensional screening, costly instruments, mechanism design, selection markets, price discrimination, bundling. 
\end{abstract}
\bigskip
\setcounter{page}{1}

\newpage
\section{Introduction}

Actions convey information. The effort to obtain credentials conveys information about the ability of a job applicant. The time spent waiting in line conveys information about the willingness to pay of a customer. The endurance of physical activity conveys information about the health status of an individual.\footnote{The New York Times reports, ``The [Wal-Mart] memo suggests that the company could require all jobs to include some component of physical activity, like making cashiers gather shopping carts.'' \textit{Wal-Mart's health care struggle is corporate America's, too}, The New York Times, October 29, 2005. See also \citet{zeckhauser2021strategic}, who argues that socially wasteful ordeals play a prominent role in health care.} These actions are often \textit{\textbf{costly}} in that they are socially wasteful.  However, because the preferences over these actions are correlated in some way with the private information that affects the allocation of productive assets, the informational content from these costly actions---in addition to the usual price instrument---could be useful for screening.

The addition of a nonprice screening instrument significantly complicates the screening problem, by making the allocation space multidimensional. Multidimensional mechanisms are often far more powerful than one-dimensional mechanisms because they can intricately link the incentive constraints (\citealt{Rochet2003}). For example, consider the following parable of \citet{Stiglitz2002}: An insurance company ``might realize that by locating itself on the fifth floor of a walk-up building, only those with a strong heart would apply. [...] More subtly, it might recognize that how far up it needs to locate itself depends on other elements of the strategy such as premium charged.''  

In this paper, we study the effectiveness of costly nonprice screening. Under what conditions should we expect these costly instruments to be used in the design of optimal contracts? Does assuming away such nonprice screening always lead to a suboptimal mechanism in this richer space of mechanisms? To address these questions, we put forward a new multidimensional screening model. The model consists of two components: \textit{(i)} a productive component which the principal intrinsically cares about (such as insurance coverage), and \textit{(ii)} a costly component which the principal may utilize to help with screening but destroys social surplus (such as walking up stairs). 

In the model, the principal designs a mechanism to assign the productive allocations in a one-dimensional space $\X$ and the costly actions in an arbitrary space $\Y$. Monetary transfers are allowed. Both the principal and the agent have quasilinear preferences that are additively separable across the two components  $\X$ and $\Y$. We say that the agent's preferences are \textit{\textbf{positively correlated}} between the two components if the type who has higher utility for the productive allocations tends to also have lower disutility for the costly actions in the stochastic dominance sense. 

Our main result (\Cref{thm:main}) states that if the agent has preferences between the two components that are positively correlated, then simply screening the one-dimensional productive component is optimal (and essentially uniquely optimal). We also provide a partial converse (\Cref{prop:converse}) showing that for a given negative correlation structure, there exist utility functions such that the optimal contract must involve costly screening. We allow the agent to have multidimensional private information; however, our result is new even when the agent has one-dimensional types because of the multidimensionality of the screening instruments (i.e. price and nonprice). 

A basic intuition behind our result can be understood as follows. Consider the parable of \citet{Stiglitz2002} with two types: a high-risk type and a low-risk type. Suppose that the high-risk type is fully insured and the low-risk type is less than fully insured. Now, suppose that the insurance company can make the contracts contingent on a physical activity such as climbing the stairs. If the high-risk type is less fit, then the company could increase the coverage targeted at the low-risk type. To purchase this contract, the individual would have to climb the stairs. Since the high-risk type finds it harder to climb the stairs than the low-risk type does, such a contract could be incentive compatible and increase the profit. On the other hand, suppose that the task were to wait in line in order to be eligible for the contract. If the high-risk type is more likely to be unemployed and hence finds waiting less costly, then the company would not benefit from this instrument, since it would make it easier for the high-risk type to mimic the low-risk type. 

However, this intuition is incomplete because it assumes monotonicity of the allocation rule, which is \textit{with} loss of generality in multidimensional settings.\footnote{Implementability in multidimensional environments is characterized by \textit{\textbf{cyclic monotonicity}} (see \citealt{rochet1987necessary}) which allows a much richer set of allocation rules.} In particular, in the above example, the firm can be better off by not insuring the high-risk type. In the standard setting, this is impossible since this allocation is not monotone in the type. But suppose that the firm decreases the coverage targeted at the high-risk type and its associated price but requires a long waiting time. Because the low-risk type finds it more costly to wait, such a non-monotone contract involving a nonprice instrument can be incentive compatible and even optimal (see \Cref{rmk:surplus}). 

A difficulty with proving our result is that the allocation rules in our setting need not be monotone. 
However, we show that quite generally the simple intuition turns out to lead to the right prediction.
The proof deals with the richer space of multidimensional mechanisms. It relies on two key ingredients. First, we show that if the agent has positively correlated preferences between the two components, then costly instruments can only help with upward incentive constraints. In particular, for a given multidimensional mechanism, we first \textit{decompose} the multidimensional type space into a collection of one-dimensional paths, and then show that on each path, we can \textit{reconstruct} a one-dimensional mechanism that involves no costly screening, improves on the original mechanism, and satisfies all downward incentive constraints. Second, we show that only downward incentive constraints are needed in any one-dimensional screening model that satisfies a single-crossing condition on the surplus function (the \textit{\textbf{surplus condition}}). This second ingredient, which we call the \textit{\textbf{downward sufficiency theorem}} (\Cref{thm:dbind}), also uncovers a novel property of one-dimensional screening models. 

In the insurance example, our result implies that (see \Cref{subsec:csignal}) costly instruments can be useful if the higher-risk type has higher costs (e.g. climbing the stairs) and cannot be useful if the higher-risk type has lower costs (e.g. waiting in line). The same result also applies to the classic setting of job market screening. Positive correlation of preferences arises there because a higher ability applicant often tends to find both the work easier to accomplish and education less costly. Our result implies that a monopsonistic firm need not make its offers contingent on the costly signals from an applicant, despite the fact that the firm prefers a higher ability applicant (see \Cref{subsec:labor}). Thus, complementary to the classic case of competitive wages \`{a} la   \citet{Spence1973Job}---which leaves very little room to screen types via monetary payments, and hence all the screening has to occur via costly efforts---we show that costly screening may not be needed in a monopsonistic market.\footnote{Indeed, \Cref{app:comp} studies a version of our model with competitive firms and shows that, consistent with \citet{Spence1973Job}, costly screening can occur in equilibrium under positive correlation of preferences.}

Beyond these direct implications, our result turns out to also yield new insights into the classic multiproduct pricing problem that may at first glance appear to be unrelated (see \Cref{subsec:bundling}). In the multiproduct pricing problem, a monopolist wants to jointly price different bundles, each of which generates a positive surplus. Our key insight is that selling the bundle of all goods can be viewed as the productive component, and selling smaller bundles \textit{instead of} the grand bundle can be viewed as the costly instruments for screening values of the grand bundle. Using this perspective, as an application, we recover a recent result by \citet{haghpanah2021pure} on when \textit{\textbf{pure bundling}} (i.e. selling only the grand bundle) is optimal.\footnote{As we show, the same logic also allows us to generalize their result to a multiple-good monopoly problem allowing for both
probabilistic bundling and quality discrimination; the optimal mechanism there generally involves price discrimination but does so only along the quality dimension (see \Cref{rmk:generalratio}).} More generally, it turns out that we can also view any menu of nested bundles as a productive component, and any bundle not in the nested menu as a costly instrument. Applying our main result, we also immediately obtain new conditions for when \textit{\textbf{nested bundling}} (i.e. selling a nested menu of bundles) is optimal, complementing recent studies of nested bundling (e.g. \citealt{yang2023nested}). 

\subsection{Related Literature}
\label{sec:lit}

This paper introduces a mechanism design framework studying price and nonprice screening. Multidimensional screening differs significantly from its one-dimensional counterpart and remains elusive to fully characterize. Much of the literature focuses on the multiple-good monopoly problem. When there is a single good, the optimal mechanism is simply a posted price (\citealt{Myerson1981}; \citealt{riley1983optimal}). However, as soon as there is more than one good, seemingly simple special cases (such as two goods with additive and independent values) turn out to be analytically intractable (\citealt{thanassoulis2004haggling}; \citealt{pycia2006stochastic}; \citealt{manelli2006bundling}). Recent literature has been focused on identifying tractable special cases (\citealt{mcafee1988multidimensional}; \citealt{pavlov2011property}; \citealt*{daskalakis2017strong}; \citealt{haghpanah2021pure}; \citealt{bikhchandani2022selling}; \citealt{ghili2023characterization}; \citealt{yang2023nested}), 
and studying approximately optimal mechanisms (\citealt{babaioff2014simple}; \citealt*{cai2016duality}; \citealt{hart2017approximate}) as well as worst-case optimal mechanisms (\citealt{Carroll2017}; \citealt{che2021robustly}; \citealt{deb2023}).

As in \citet{yang2023nested}, this paper also identifies a new tractable case of multidimensional screening. Conceptually, it offers a new perspective by considering a multidimensional screening model in which all dimensions except one are surplus destructive. The model turns out to yield clear economic insights into when multidimensional mechanisms can or cannot improve on one-dimensional mechanisms. The multiproduct pricing problem can be viewed as a special case of our model by redefining the allocation space. Among this literature, the closest paper to ours is \citet{haghpanah2021pure}, who study the optimality of pure bundling. Building on \citet{haghpanah2021pure}, our positive correlation condition uses the notion of stochastic monotonicity and relies on the \textit{canonical representation} technique in dynamic mechanism design to decompose the type space (\citealt{Eso2007}; \citealt*{Pavan2014}). Our key technical contribution is the downward sufficiency theorem, which allows us to solve the screening problem with multiple instruments after decomposing the type space. The downward sufficiency theorem relies on a single-crossing condition on the surplus function, which is also new and automatically satisfied in \citet{haghpanah2021pure}.

Our model of costly screening builds on the literature on mechanism design with money burning (\citealt{banerjee1997theory}; \citealt{hartline2008optimal}; \citealt{condorelli2012money}). In particular,  \citet{amador2013theory,amador2020money} and \citet{ambrus2017delegation} allow for money burning in a delegation model and characterize when the optimal mechanism does not rely on money burning. A recent line of work (\citealt{Condorelli2013}; \citealt*{ortoleva2022cares}; \citealt*{akbarpour2024redistributive}) compares non-market mechanisms with market mechanisms when the agent's willingness to pay is informative about their welfare weight. Since welfare weights affect only the principal's objective, these models have a one-dimensional effective type, and the relevant correlation is about the agent's and principal's preferences.  By contrast, in our model, the multiple dimensions of private information are all payoff-relevant to the agent, and the correlation condition is about the agent's preferences between different screening instruments.\footnote{In a follow-up work, \citet*{yang2024comparison} study the comparison of different costly screening devices when monetary transfers are not allowed.}

The remainder of the paper proceeds as follows. \Cref{sec:model} presents our model. \Cref{sec:main} presents the main result and a partial converse. \Cref{sec:proof} presents the proof of the main result. \Cref{sec:app} presents the applications. \Cref{sec:conclusion} concludes. \Cref{app:proof} provides the omitted proofs. \Cref{app:b} (online appendix) provides additional results and proofs. 

\section{Model}\label{sec:model}
A principal wants to screen an agent. The agent has private information summarized by a multidimensional type $\theta = (\theta^A, \theta^B)$, where $\theta^A \in \Theta^A \subseteq \R$ and $\theta^B \in \Theta^B \subseteq \R^N$ for a finite $N$; for convenience, sometimes we also refer to $\theta^A$ as $\theta^0$ and $\theta^B$ as $(\theta^1, \dots, \theta^N)$.  We use the superscripts $A$, $B$ to indicate the productive and costly components, respectively. 

Both $\Theta^A$ and $\Theta^B$ are assumed to be compact. Let $\Theta := \Theta^A \times \Theta^B$ denote the type space; let $\Delta(\Theta)$ denote the space of Borel probability measures on $\Theta$, equipped with the weak-$^*$ topology. The agent's type is drawn from a commonly known distribution $\mu\in \Delta(\Theta)$. 

The space of \textit{\textbf{productive allocations}} $\X \ni x$ is a compact subset of $\R$; the space of \textit{\textbf{costly instruments}} $\Y \ni y$ is an arbitrary measurable space. 

Both the principal and the agent have quasilinear preferences that are additively separable across the two components: The principal's (ex post) payoff is given by 
\[v^A(x, \theta^A) + v^B(y, \theta^B) + t\,, \]
and the agent's payoff is given by 
\[u^A(x, \theta^A) + u^B(y, \theta^B) - t\,, \]
where $t$ denotes \textit{\textbf{monetary transfer}}. The utility functions for the productive component $u^A$, $v^A$ are assumed to be continuous on $\X \times \Theta^A$; those for the costly component $u^B$, $v^B$ are allowed to be any bounded measurable functions on $\Y \times \Theta^B$. The principal has \textit{\textbf{interdependent preferences}} if $v^A$ or $v^B$ depends on the agent's type. 

The (ex post) surplus functions for the two components are denoted by 
\[s^A(x, \theta^A)  := u^A(x, \theta^A) + v^A(x, \theta^A)\,,\quad s^B(y, \theta^B)  := u^B(y, \theta^B) + v^B(y, \theta^B)  \,. \]
The defining feature of the costly component is that any allocation is socially wasteful under complete information: for all $y \in \Y$ and all $\theta^B \in \Theta^B$, 
\[\label{eq:star} s^B(y, \theta^B) \leq 0  \,. \tag{1}\]
We assume that there is an element $y_0 \in \Y$ representing \textit{\textbf{no costly screening}}: 
\[v^B(y_0, \theta^B) = u^B(y_0, \theta^B) = 0 \,. \tag{2}\]
We say the instruments are \textit{\textbf{strictly costly}} if \eqref{eq:star} holds strictly for all $y \neq y_0$ and all $\theta^B$.  

The game proceeds as follows. The principal posts a menu of options $\big\{(x, y, t)\big\}$. The agent either declines, which we assume results in payoff $0$ for both parties, or selects an option $(x, y, t)$ from the menu, pays monetary transfer $t$, takes costly action $y$, and then gets allocation $x$. By the revelation principle, it is without loss of generality to restrict attention to direct mechanisms. A \textit{\textbf{(direct, incentive-compatible) mechanism}} is a measurable map
\[(x, y, t): \Theta \rightarrow \X \times \Y \times \R\]
satisfying the usual incentive compatibility (IC) and individual rationality (IR) constraints:
\begin{alignat*}{2}
u^A(x(\theta), \theta^A) + u^B(y(\theta), \theta^B) - t(\theta) &\geq u^A(x(\hat{\theta}), \theta^A) + u^B(y(\hat{\theta}), \theta^B) - t(\hat{\theta})  \quad &&\text{for all } \theta, \hat{\theta} \in \Theta  \,;  \\ 
u^A(x(\theta), \theta^A) + u^B(y(\theta), \theta^B) - t(\theta) &\geq 0 &&\text{for all }  \theta \in \Theta  \,.
\end{alignat*}
Let $\M(\Theta)$ denote the space of IC and IR mechanisms. The principal wants to solve 
\[\sup_{(x, y, t) \in \M(\Theta)} \E[v^A(x(\theta), \theta^A) + v^B(y(\theta), \theta^B) + t(\theta)]  \,.\]
A mechanism $(x, y, t)$ \textit{\textbf{involves no costly screening}} if $y(\theta) = y_0$ for all $\theta$ and $(x, t)$ does not depend on $\theta^B$, in which case the mechanism screens only the productive component. A mechanism $(x, y, t)$ \textit{\textbf{involves no costly screening almost everywhere}} if $P(y(\theta) = y_0) = 1$. 

We impose the classic increasing differences assumption on the productive component: the agent's utility $u^A(x, \theta^A)$ is nondecreasing in $\theta^A$, and for any $x < \hat{x}$, $\theta^A < \hat{\theta}^A$,  
\[u^A(\hat{x}, \theta^A)  - u^A(x, \theta^A)  < u^A(\hat{x}, \hat{\theta}^A)  - u^A(x, \hat{\theta}^A)\,.\]
We impose a new condition on the surplus function $s^A(x, \theta^A)$: for any $x < \hat{x}$,  $\theta^A<\hat{\theta}^A$, 
    \[ \hspace{41mm} s^A(\hat{x}, \theta^A) > s^A(x, \theta^A) \implies s^A(\hat{x},  \hat{\theta}^A)  > s^A(x,  \hat{\theta}^A)\,.  \hspace{3mm}  \textbf{\textit{(Surplus Condition)}}\]
This condition says that if a higher allocation $x$ generates more surplus at some type $\theta^A$, then it continues to do so for higher types.

To state our notion of positive correlation of preferences between the two components, we introduce some notation. Let $\preceq_{st}$ denote the usual stochastic order for $\R^N$-valued random variables, i.e. $X \preceq_{st} Y$ if $\E[f(X)] \leq \E[f(Y)]$ for all bounded nondecreasing (measurable) functions $f:\R^N \rightarrow \R$.\footnote{We say a function $f:\R^N \rightarrow \R$ is \textit{\textbf{nondecreasing}} if $f(x) \leq f(y)$ for all $x \leq y \in \R^N$, where $\leq$ is the componentwise weak inequality.} Let $\theta^B \mid \theta^A$ denote the regular conditional distribution of $\theta^B$ given $\theta^A$.\footnote{See e.g. \citeauthor{klenke2013probability} (\citeyear{klenke2013probability}, pp. 180-185). } We assume that the agent's utility $u^B(y, \theta^B)$ is nondecreasing in $\theta^B$ so that $\theta^B$ represents the strength of the preferences on the costly component. We say that the agent has \textit{\textbf{positively correlated preferences (between the two components)}} if for any $\theta^A < \hat{\theta}^A $, 
\[ \hspace{60mm} \theta^B \mid \theta^A  \preceq_{st} \theta^B \mid \hat{\theta}^A\,. \hspace{30mm} \textbf{\textit{(Positive Correlation)}}\]
This condition says that an agent who has a higher $\theta^A$ tends to have a higher $\theta^B$ in the stochastic dominance sense. 

\subsection{Discussion of Assumptions}

\paragraph{Costly Instruments.}\hspace{-2mm}A key assumption in our model is that costly instruments do not generate surplus. This assumption makes it clear that the costly instruments are only useful as screening devices. We impose very little preference structure on the costly component: Both the allocation space $\mathcal{Y}$ and the type space $\Theta^B$ are allowed to be multidimensional; the utility functions $u^B$ and $v^B$ are also fully general. The generality helps highlight the key driving force of our main result, and can also be useful in applications (as we discuss below). It also clarifies that for our main result, we need the agent's preferences to exhibit the positive correlation property for all costly instruments available.\footnote{As we show in the partial converse, if there were a costly instrument that displays negative correlation, then the principal could be strictly better off by using costly screening.}

\paragraph{Additive Separability.}\hspace{-2mm}The model also assumes that the preferences are additively separable across the two components. Additive separability is an important assumption as it allows us to collapse the multidimensional screening problem into a one-dimensional screening problem when costly screening is not used.\footnote{Because costly screening would never be used under surplus-maximizing allocations, such dimension collapsing also implies that ex post efficient allocation rules are generally implementable in our setting in contrast to \citet{jehiel2001efficient} where the ex post efficient allocations depend nontrivially on the multidimensional types and are generally impossible to implement.}  
However, in certain applications with non-additive preferences (such as the bundling application in \Cref{subsec:bundling}), because the surplus-maximizing allocations depend only on a one-dimensional type, we may be able to transform the problem into this model by considering an additive preferences structure where the costly component is multidimensional (i.e. encoding all allocations that cannot be surplus-maximizing for any type).

\paragraph{Surplus Condition.}\hspace{-2mm}Our surplus condition is a single-crossing condition on the surplus function in the productive component. It is a new condition but commonly satisfied in one-dimensional screening problems.\footnote{By \textit{\textbf{one-dimensional screening problems}}, we refer to the ones that satisfy the increasing differences condition; we maintain this terminology throughout.} For example, this condition automatically holds when the principal's preferences are not interdependent, given increasing differences in the agent's preferences. In general, however, this condition differs from the increasing differences condition on the agent's preferences. Sufficient conditions for the surplus condition include, for example, \textit{(i)} $s^A$ is strictly increasing in $x$, or \textit{(ii)} $s^A$ is twice-differentiable with the cross partial derivative $s^A_{12} \geq 0$. This assumption ensures that there is a monotone efficient allocation rule. It is not satisfied when the principal's preference to trade with low types is so strong that any socially efficient allocation rule is not monotone. This is an important assumption for our key technical result, the downward sufficiency theorem (\Cref{thm:dbind}).  As we discuss in \Cref{rmk:surplus}, in the absence of this condition, costly screening can be profitable even if the positive correlation condition holds because the principal may profitably use it to deter upward deviations.

\paragraph{Positive Correlation.}\hspace{-2mm}The positive correlation condition is also known as stochastic monotonicity (\citealt{muller2002comparison}). In particular, we will equivalently say that $\theta^B$ is \textit{\textbf{stochastically nondecreasing}} in $\theta^A$ whenever the positive correlation condition holds. This is an asymmetric condition. It says that observing a high $\theta^A$ conveys good news about $\theta^B$ in the sense of stochastic dominance. A sufficient but not necessary condition is that $(\theta^A, \theta^B)$ are affiliated in the sense of \citet{Milgrom1982} (see \Cref{lem:aff} in \Cref{app:add}). Note that this statistical condition can encode the agent's preferences between the two components because we assume that both $u^A(x,\,\cdot\,)$ and $u^B(y, \,\cdot\,)$ are nondecreasing in $\theta^A$ and $\theta^B$, respectively.\footnote{A closely related stochastic monotonicity condition is used by \citet{haghpanah2021pure} for the optimality of pure bundling---their condition asks the ratios of any bundle values relative to the grand bundle values to be stochastically monotone in the grand bundle values. As we explain in \Cref{subsec:bundling}, their condition can be viewed as a positive correlation of preferences between a productive component and a costly component in an appropriate sense.}

\section{Main Result} \label{sec:main}
Our main result says that if the agent has positively correlated preferences between the productive and costly components, then simply screening the one-dimensional productive component is optimal and essentially uniquely optimal: 

\begin{restatable}{theorem}{thm:main}\label{thm:main} Under~the~surplus~condition, if~the~agent~has~positively~correlated~preferences,~then: 
\begin{itemize}
    \item[(i)] There exists an optimal mechanism that involves no costly screening. 
    \item[(ii)] If the instruments are strictly costly, then every optimal mechanism involves no costly screening almost everywhere. 
\end{itemize}
\end{restatable}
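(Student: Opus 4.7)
The overall strategy is a primal ``shift'' argument: given any IC mechanism $(x,y,t)$, I would construct a replacement $(\tilde x, y_0, \tilde t)$ that ignores $\theta^B$, remains IC, and weakly improves the principal's payoff. The proof combines a path decomposition of $\Theta$ via Strassen's theorem with the structure of one-dimensional screening on each path. By Assumption~(2.2) and Strassen's theorem for stochastically monotone kernels, I would produce a probability space $(\Omega,\nu)$ and a measurable map $\phi: \Theta^A \times \Omega \to \Theta^B$ such that $\phi(\cdot,\omega)$ is componentwise non-decreasing for each $\omega$ and the pushforward of $\gamma_{\theta^A}\otimes\nu$ under $(\theta^A,\omega)\mapsto(\theta^A,\phi(\theta^A,\omega))$ equals $\gamma$. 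This slices $\Theta$ into monotone paths $\Gamma_\omega := \{(\theta^A, \phi(\theta^A,\omega)) : \theta^A \in \Theta^A\}$, and the restriction of $(x,y,t)$ to each $\Gamma_\omega$ is a one-dimensional mechanism whose principal payoff integrates against $\nu$ to recover the global objective.

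On each path $\Gamma_\omega$, writing $x_\omega(\theta^A) := x(\theta^A,\phi(\theta^A,\omega))$ and similarly for $y_\omega, t_\omega$, I would perform the shift $t_\omega'(\theta^A) := t_\omega(\theta^A) - u^B(y_\omega(\theta^A), \phi(\theta^A,\omega))$, setting $y = y_0$ and absorbing the costly-action utility into transfers. For $\theta^A < \hat\theta^A$, the new downward IC at $\hat\theta^A$ follows from the original one because their discrepancy equals
\[
u^B\bigl(y_\omega(\theta^A), \phi(\hat\theta^A,\omega)\bigr) - u^B\bigl(y_\omega(\theta^A), \phi(\theta^A,\omega)\bigr) \geq 0,
\]
by Assumption~(2.1) and monotonicity of $\phi$; this is the formal content of the assertion that costly instruments tighten downward ICs under positive correlation. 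By~\eqref{eq:star}, the path-wise principal payoff changes by $-s^B \geq 0$. Using Assumptions~(1.2)--(1.3), I would then monotonize $x_\omega$ via a non-decreasing rearrangement that preserves downward IC and weakly improves principal surplus; monotonicity plus downward IC implies upward IC under strict increasing differences, so the resulting path-wise mechanism is fully IC and dominates the restriction of $(x,y,t)$ to $\Gamma_\omega$.

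Finally, I would aggregate these path-wise constructions into a single global mechanism on $\Theta^A$. Because $v^A$ does not depend on $\theta^B$, the reduced one-dimensional problems differ across paths only through the information-rent contribution of $\theta^B$, which the Strassen coupling lets me package into an expected virtual surplus on $\Theta^A$ alone; the resulting $(\tilde x, y_0, \tilde t)$ is globally IC and weakly dominates $(x,y,t)$. Part~(ii) then follows because strict costliness turns $-s^B \geq 0$ into a strict inequality on any positive-measure set where $y \neq y_0$, so any optimal mechanism must satisfy $y(\theta) \eqas y_0$. The main obstacle is the aggregation step: carrying out the monotone rearrangement measurably and consistently across $\omega$, and verifying that the resulting globally measurable $\tilde x$ still remains globally IC when $v^A$ is interdependent and Assumption~(1.3) provides only weak single-crossing on the surplus.
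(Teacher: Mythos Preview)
Your path decomposition and shift argument match the paper's Steps~1--2 essentially verbatim, and your observation that the shift preserves downward IC while weakly raising principal payoff is correct. The gap is in what follows. Your proposed ``monotonize $x_\omega$ via a non-decreasing rearrangement that preserves downward IC and weakly improves principal surplus'' is not a valid step: under downward IC alone, \emph{every} allocation rule is implementable (as the paper shows in Step~3.1), so there is no a~priori reason a rearrangement of a given $x_\omega$ improves the objective---the virtual-surplus function can be non-monotone and require ironing, and the transfers must be re-derived from scratch (they depend on the \emph{shape} of the allocation, not just its values). Moreover, ``monotonicity plus downward IC implies upward IC'' is false as stated; one needs the \emph{local} downward ICs to bind, which is a property of the \emph{optimal} transfers, not of an arbitrary downward-IC mechanism. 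The paper's substantive work (Proposition~2, Steps~3.1--3.2) is precisely to prove that any \emph{optimizer} of the downward-IC-relaxed problem is monotone, via a delicate perturbation argument that uses Assumption~(1.3) in an essential way; a rearrangement shortcut does not do this.

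You have also missed the observation that dissolves your aggregation obstacle entirely. After the shift, $y\equiv y_0$ forces $u^B=v^B=0$, so the relaxed one-dimensional problem on each path depends \emph{only} on the marginal of $\theta^A$---and since $\varepsilon$ is independent of $\theta^A$ by construction, this marginal is the \emph{same} on every path. Hence there is a single one-dimensional downward-IC problem (the paper's \eqref{eq:1d}), its optimizer $(x^*,t^*)$ is path-independent, and once Proposition~2 shows $(x^*,t^*)$ is fully IC, it is automatically a global mechanism on $\Theta$ with no aggregation needed. Your sentence about ``packaging into an expected virtual surplus on $\Theta^A$ alone'' is pointing in the right direction but obscures that the packaging is trivial: there is nothing $\omega$-dependent left to package.
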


In the case of negatively correlated preferences, we show a partial converse:

\begin{prop}[Partial converse]\label{prop:converse}
Suppose that the type distribution has a density, $|\X|>1$, and $|\Y|>1$. Suppose that there exists some dimension $i$ in the costly component such that $\theta^i$ is stochastically \emph{nonincreasing} in $\theta^A$ and they are not independent. Then, there exist utility functions ($u^A$, $u^B$, $v^A$, $v^B$) satisfying the surplus condition such that any mechanism screening only the productive component is strictly dominated by a mechanism involving costly screening. 
\end{prop}

We discuss the intuition behind \Cref{thm:main} in \Cref{subsec:intuition} and \Cref{subsec:intuition_proof}, and then present the proof in \Cref{sec:proof}. \Cref{prop:converse} can be shown by a simple construction that sets the principal's utility functions $v^A$ and $v^B$ to be $0$.\footnote{In a follow-up work, \citet{yang2023nested} uses the connection between bundling and costly screening to obtain necessary and sufficient conditions for costly screening to be optimal in the case of perfectly negatively correlated preferences.} The proof proceeds as follows. The principal can always create a menu of two nontrivial options for the agent: \textit{(i)} getting the favorite allocation in $\X$ at a high price, and \textit{(ii)} getting the same allocation at a low price but with some costly activity. The proof shows that if $\theta^i$ is negatively correlated with $\theta^A$ as defined in the statement, then there exist some utility functions for the agent such that this way of price discrimination is always more profitable for the principal than selling the elements in $\X$ alone. The appendix provides details.

\subsection{Discussion of Intuition with Two Types}\label{subsec:intuition}
In this section, we present the key intuition behind our main result. We start by presenting two binary-type examples, building on the discussion in the Introduction. We then discuss how the intuition generalizes to the case of multiple and potentially multidimensional types. 

The first example shows how the principal can make use of the costly instrument if the agent has negatively correlated preferences between the two components. 
\begin{ex}[Negative correlation]\label{ex:negcor}
A principal sells insurance to an agent who has two types: $\theta^A = 0$ is the low-risk type and $\theta^A = 1$ is the high-risk type. Suppose the types have equal probabilities. The low-risk type has value $2$ for the insurance, whereas the high-risk type has value $3$. It costs the insurance firm $0$ to serve the low-risk type, and $\frac{5}{2}$ to serve the high-risk type. Then, the agent's and the principal's utilities for the productive component are: 
\[u^A(x,\theta^A) = (\theta^A + 2) x, \qquad v^A(x, \theta^A) = - \frac{5}{2} \theta^A x\,, \]
where $x \in [0, 1]$ represents insurance coverage. Suppose that the costly instrument $y$ is \textit{climbing the stairs}, as in the Introduction, for which the low-risk type has cost $0$ and the high-risk type has cost $1$. Then, the agent's and the principal's utilities for the costly component are: 
\[\qquad \qquad \qquad \quad  \qquad u^B(y,\theta^B) = \theta^B y, \qquad v^B(y, \theta^B) = 0 \,, \quad \text{where $\theta^B = - \theta^A$ }\,.\]
The agent has negatively correlated preferences here because the high-risk type has a
higher utility for the insurance but also higher disutility for the costly action. 

If the principal does not use the costly instrument, then this becomes a one-dimensional screening problem, and one may verify that the optimal one-dimensional mechanism is to offer full insurance (i.e., $x = 1$) at the price of $2$ to sell to both types. This yields a profit 
\[2 \times 1 - \frac{1}{2} \times \frac{5}{2} = \frac{3}{4}\,.\]
However, the principal can mitigate the adverse selection problem by only offering full insurance to the agent if the agent pays the price $2$ and also climbs the stairs. The high-risk type, finding climbing the stairs too costly, does not purchase this plan. Then, the principal gets a profit 
\[
\pushQED{\qed} 
2 \times \frac{1}{2} = 1 > \frac{3}{4}\,. \qedhere
\popQED\] 
\end{ex}

The next example shows that, when the agent has positively correlated preferences,  even though there are incentive-compatible mechanisms in which the costly instrument is used, such a mechanism is dominated by one without costly screening.

\begin{ex}[Positive correlation]\label{ex:poscor}
The productive component is the same as in \Cref{ex:negcor}. We change the costly component. Suppose that the costly instrument $y$ is \textit{waiting in line} for which the low-risk type ($\theta^A = 0$) has cost $1$ and the high-risk type ($\theta^A = 1$) has cost $0$. Then, the agent's and the principal's utilities for the costly component are:
\[\qquad \qquad \qquad  u^B(y,\theta^B) = (\theta^B-1) y, \qquad v^B(y, \theta^B) = 0 \,, \quad \text{where $\theta^B = \theta^A$ }\,.\]
The agent has positively correlated preferences here because the high-risk type has a higher utility for the insurance but also lower disutility for the costly action.

There are incentive-compatible mechanisms in which the costly instrument is used. For example, suppose that the principal offers two options:
\begin{itemize}
    \item[\textit{(i)}] a full-insurance plan with a price $2$
    \item[\textit{(ii)}] a low-coverage plan with coverage $x = \frac{1}{2}$ that has a price $\frac{1}{2}$ but requires a waiting time $y = \frac{1}{2}$
\end{itemize}
Under this menu, the low-risk type, finding waiting too costly, purchases the full-insurance plan; the high-risk type, finding the low-coverage plan cheap, purchases the low-coverage plan. Note that similar to \Cref{ex:negcor}, under this mechanism, the insurance allocation $x$ is \textit{not} monotone in the type $\theta^A$, which cannot happen in standard settings, but is a common feature in multidimensional settings.  

However, this mechanism is strictly dominated by simply selling the full-insurance plan to both types at a price $2$ without any costly screening. Indeed, under the proposed two-option menu, the principal gets a profit
\[\pushQED{\qed} 
\frac{1}{2} \times 2 + \frac{1}{2}  \times \big(\frac{1}{2} - \frac{1}{2} \times \frac{5}{2}\big) = \frac{5}{8} < \frac{3}{4}\,.\qedhere
\popQED\] 
\end{ex}

In \Cref{ex:poscor}, there are also many other incentive-compatible mechanisms that involve costly screening, but \Cref{thm:main} shows that any such mechanism must be dominated by simply selling the full-insurance plan to both types. The intuition behind \Cref{thm:main} can be understood in two steps. 

First, note that the costly instruments are used in different ways in the two examples. The costly action in \Cref{ex:poscor} is required for the eligibility to purchase the low-coverage option targeted at the high-risk type, which helps deter the deviation by the low-risk type to mimic the high-risk type, the \textit{\textbf{upward deviation}}. The costly action in  \Cref{ex:negcor} is required for the eligibility to purchase the full-coverage option targeted at the low-risk type, which helps deter the deviation by the high-risk type to mimic the low-risk type, the \textit{\textbf{downward deviation}}. The correlation of preferences between the two components determines which direction of the deviations the costly instrument helps deter. 

Second, it turns out that, under the surplus condition, the principal can ignore the upward deviation when screening the productive component. \Cref{ex:negcor} and \Cref{ex:poscor} have the same productive component that satisfies the surplus condition: In particular, it is socially efficient for the principal to fully insure both types. Now, to see why the principal can ignore upward deviation in this two-type case, suppose that we only impose the downward incentive constraint. Then, no type wants to mimic the high-risk type, which means that the principal would assign the efficient allocation for the high-risk type (i.e. ``no distortion at the top''). But then the high-risk type would get full coverage, which means that the low-risk type can only get weakly lower coverage. Under the downward incentive constraint, the principal would charge the prices such that the high-risk type is indifferent between getting the full-coverage plan and mimicking the low-risk type. Given that the low-risk type has a weakly lower coverage, the increasing differences property in the agent's utility then implies that the low-risk type will not deviate upward, satisfying the upward incentive constraint. 

Because the costly instrument is surplus destructive, whereas the utility through monetary transfers is perfectly transferable, the principal would only use the costly instrument to supplement the use of monetary transfers if the costly instrument can help relax some of the incentive constraints. By the first observation, if the agent has positively correlated preferences, then the costly instrument can only help relax the upward incentive constraint. But by the second observation, there is no additional benefit for the principal to relax the upward incentive constraint since it can be safely ignored under the surplus condition. Thus, the optimal mechanism does not involve costly screening.

\begin{rmk}\label{rmk:surplus}
Consistent with the above intuition, in the absence of the surplus condition, the principal cannot simply ignore the possible upward deviations for screening the productive component, and hence may make use of costly screening even if the agent has positively correlated preferences. See \Cref{ex:ubind} in \Cref{app:example} for such an example. 
\end{rmk}

\subsection{Discussion of Intuition beyond Two Types}\label{subsec:intuition_proof}

In general, the agent can have multiple and potentially multidimensional types $(\theta^A, \theta^B)$. To understand our result, there are two additional key intuitions in the more general case. 

\paragraph{Orthogonal Decomposition.}\hspace{-2mm}With multidimensional types, for any full-support distribution $\mu \in \Delta(\Theta)$, the statistical correlation between $(\theta^A, \theta^B)$ does not affect the set of incentive constraints. However, as \Cref{thm:main} and \Cref{prop:converse} show, the correlation is crucial to determine the structure of the optimal mechanism. This is because it influences the set of \textit{\textbf{binding}} incentive constraints at the optimum by influencing the principal's payoff. One way to identify which incentive constraints are binding is to consider \textit{\textbf{revealing}} part of the agent's private information to the principal instead of asking the principal to elicit this part of the information. This forms a relaxed problem since the principal must be weakly better off by observing this information. 
If it happens to be the case that, upon observing this information, the principal's optimal mechanism actually does not depend on this piece of information, then the mechanism must be optimal in the original problem and the agent does not obtain any information rent from having this piece of information. 

When $(\theta^A, \theta^B)$ are positively correlated, by the discussion in \Cref{subsec:intuition}, intuitively, the binding incentive constraints involve downward deviations in the $\theta^A$ dimension. To use the above argument, we therefore look for a piece of the agent's private information that is independent of $\theta^A$. Revealing $\theta^B$ itself would not work in general since $\theta^B$ is correlated with $\theta^A$, and hence the principal would design the mechanism depending on the information revealed. However, we can use the technique of ``orthogonal decomposition'' from the dynamic mechanism design literature (\citealt{Eso2007}; \citealt*{Pavan2014}) by writing the agent's multidimensional types as $(\theta^A, \varepsilon)$ where the transformed type $\varepsilon$, which we will reveal to the principal, is in fact independent of $\theta^A$.\footnote{The dynamic mechanism literature also studies conditions under which the principal observing $\varepsilon$ or having to elicit $\varepsilon$ leads to the same solution (see e.g. \citealt{esHo2017dynamic}). One way to interpret our construction is that we also provide a condition under which the principal observing $\varepsilon$ or having to elicit $\varepsilon$ leads to the same solution, though in a multidimensional screening problem both $\theta^{A}$ and $\varepsilon$ must be elicited simultaneously rather than sequentially.} Indeed, by a monotone coupling argument (\Cref{lem:decomp}), if $(\theta^A, \theta^B)$ are positively correlated according to our definition, then there exists $\varepsilon \perp \theta^A$ such that $\theta^B = h(\theta^A; \varepsilon)$ where $h$ is nondecreasing in $\theta^A$. Then, after $\varepsilon$ is revealed, the residual private information is one-dimensional and comonotonic just like in \Cref{ex:poscor} but potentially with a continuum of types $\theta^A$. Indeed, the binding incentive constraints would come from the downward deviations along each of these monotone paths.  

\paragraph{Downward Sufficiency.}\hspace{-2mm}When the agent has multiple types of $\theta^A$, it turns out that downward incentive constraints continue to be sufficient for all incentive constraints at an optimal solution in the one-dimensional screening problem. This downward sufficiency theorem is our main technical result and relies on the surplus condition. In the above two-type example, as we have discussed, it is relatively easy to see why the downward incentive constraint is sufficient for obtaining the optimal mechanism when screening the productive component. Another case where it is relatively easy to see that only downward incentive constraints matter is when the \textit{\textbf{virtual surplus function}} satisfies  increasing differences (pp. 262–268 of \citealt{Fudenberg1991})---in that case, the Myersonian relaxation which keeps only \textit{\textbf{local}} downward incentive constraints leads to a pointwise solution that is monotone. Note that such an assumption on the virtual surplus function is a joint condition on the utility functions and the type distribution of $\theta^A$, whereas our surplus condition is agnostic to the type distribution. Perhaps surprisingly, even when the Myersonian relaxation leads to a non-monotone solution, the set of \textit{\textbf{global}} downward incentive constraints turns out to be sufficient under the surplus condition.

The key intuition behind this result can be understood as follows. Suppose that we impose only the downward incentive constraints. If it happens to be the case that the resulting optimal allocation rule (in the productive component alone) is monotone, then the previous argument in the two-type case extends and shows that the mechanism would satisfy all incentive constraints.
It turns out that the surplus condition would lead the principal to design a monotone allocation given \textit{\textbf{global}} downward incentive constraints. Roughly speaking, the intuition can be understood as follows. When the allocation rule is not monotone, there will be some middle type $\theta^A_{m}$ whose allocation is lower than the allocation of some lower type $\theta^A_{l} < \theta^A_m$. But then it means that the higher types $\theta^A_h > \theta^A_m$ would not want to deviate to $\theta^A_m$ even if we perturb the allocation of $\theta^A_m$---importantly, here the binding downward incentive constraint is nonlocal---the most tempting deviation from $\theta^A_h$ is to $\theta^A_{l}$. This implies that when perturbing the allocation of $\theta^A_m$, there is no information rent concern but only surplus concern. But the surplus condition ensures that we prefer a monotone allocation rule when maximizing surplus, which leads to a profitable perturbation. Hence, the principal would optimally choose a monotone allocation rule given the set of all downward incentive constraints.

\begin{rmk}
In one-dimensional screening problems, even if the Myersonian approach that keeps only the local downward incentive constraints leads to a non-monotone allocation rule, one can resort to ironing because monotonicity of the allocation rule is implied by implementability. However, as shown in \Cref{ex:poscor}, with a multidimensional allocational space, even the allocation rule in the productive component need not be monotone. The downward sufficiency theorem shows that there is an underlying deeper property in one-dimensional screening problems that can be exploited to prove optimality of certain one-dimensional mechanisms in multidimensional settings.\footnote{An alternative approach explored in \citet{yang2023nested} is to provide more general conditions under which the Myersonian relaxation succeeds by developing more general monotone comparative statistics theorems.} 
\end{rmk}

\section{Proof Sketch for the Main Result} \label{sec:proof}

In this section, we sketch the proof of the first part of \Cref{thm:main}, leaving the second part (the uniqueness part) and additional details to the appendix. 

The proof follows closely the intuition provided in \Cref{subsec:intuition}. It consists of three steps. \Cref{subsec:decompose} considers a relaxed problem by decomposing the multidimensional type space into a collection of one-dimensional paths. \Cref{subsec:reconstruction} shows that on each path, for any mechanism involving costly screening, we can reconstruct a one-dimensional mechanism involving no costly screening and satisfying all downward incentive constraints. \Cref{subsec:dst} proves the downward sufficiency theorem---the optimal mechanism subject to all downward incentive constraints is fully incentive compatible---which concludes the proof of \Cref{thm:main}.

\subsection{Monotone Path Decomposition} \label{subsec:decompose}

We start by considering a relaxed problem where we reveal part of the private information of the agent to the principal. As discussed in \Cref{subsec:intuition}, we will construct a random variable $\varepsilon$ such that \textit{(i)}  $\varepsilon$ is independent of the productive type $\theta^A$, and \textit{(ii)} the residual private information of the agent $(\theta^A, \theta^B) \mid \varepsilon$, once $\varepsilon$ is revealed, is one-dimensional (and comonotonic). By a classic result of \citet{Strassen1965} on monotone coupling, stochastic monotonicity implies that we can always find such a monotone path decomposition: 

\begin{lemma}[Measurable monotone coupling] \label{lem:decomp}
If $\theta^B$ is stochastically nondecreasing in $\theta^A$, then there exist a measurable space $\mathcal{E}$; an $\mathcal{E}$-valued random variable $\varepsilon$, independent of $\theta^A$; and a measurable function $h: \Theta^A \times  \mathcal{E} \rightarrow \Theta^B$ nondecreasing in the first argument such that 
\[\theta \eqid (\theta^A, h(\theta^A; \varepsilon))\,.\]
\end{lemma}

The proof of this lemma is in \Cref{app:add}, building on a result of \citet{kamae1978stochastic}.\footnote{Our use of monotone coupling follows \citet{haghpanah2021pure}, though because of measurability issues in our setting, we prove a measurable monotone coupling lemma.} To illustrate \Cref{lem:decomp}, consider the case where $\Theta^B$ is one-dimensional; in this case, this lemma follows by using an ``orthogonal decomposition'' as follows.\footnote{See \citet{Eso2007} and  \citet*{Pavan2014}.} Let $\varepsilon$ be an independent uniform $[0, 1]$ draw. 
Then the random vector $(\theta^A, \theta^B)$ can be simulated by 
\[(\theta^A, \theta^B) \eqid (\theta^A, F^{-1}(\varepsilon \mid \theta^A)) \,,\]
where $F^{-1}(\, \cdot \, | \, \theta^A)$ is the generalized inverse function of $F(\, \cdot \, | \,\theta^A)$. Our positive correlation condition states that $\theta^B \mid \theta^A$ shifts upward in the sense of stochastic dominance as $\theta^A$ increases. This implies that  $F^{-1}(\varepsilon \,|\, \cdot \,  )$ is a nondecreasing function. Hence letting $h(\theta^A;\varepsilon) := F^{-1}(\varepsilon\mid \theta^A)$ gives a construction satisfying \Cref{lem:decomp}. However, when $\Theta^B$ is multidimensional, the existence of such a monotone coupling is nonconstructive and relies on Strassen's theorem.

By \Cref{lem:decomp}, the agent's type can be equivalently represented as $(\theta^A, \varepsilon)$. Now, consider the relaxed problem where we reveal the information $\varepsilon$ to the principal. Formally, let $\Theta_\varepsilon :=\{(\theta^A,\theta^B): \theta^B = h(\theta^A; \varepsilon), \theta^A \in \Theta^A\}$ be the decomposed monotonic path given a realization $\varepsilon$. Let $\mathcal{M}(\Theta_\varepsilon)$ be the set of IC and IR mechanisms with type space $\Theta_\varepsilon$.

For each realization $\varepsilon$, the screening problem 
\[\label{eq:sobj} \sup_{(x,y,t) \in \mathcal{M}(\Theta_\varepsilon)} \E\big [v^A(x(\theta), \theta^A) + v^B(y(\theta), \theta^B) + t(\theta) \mid \varepsilon \big ] \tag{3}\]
is precisely the screening problem where the information $\varepsilon$ is revealed to the principal. Clearly, the principal would be weakly better off by receiving the information. 

We will show that even in this relaxed problem, for each realization $\varepsilon$, the principal does not use the costly instruments. Then, because \textit{(i)} $\varepsilon$ is independent of $\theta^A$, and \textit{(ii)} the preferences are additive and 
\[v^B(y_0, \theta^B) = u^B(y_0, \theta^B) = 0\,,\]
there exists a solution to the relaxed problem such that it does not depend on $\varepsilon$, and hence is implementable in the original problem where $\varepsilon$ has to be elicited.

 Conditional on $\varepsilon$, by \Cref{lem:decomp}, the screening problem \eqref{eq:sobj} is an instance of our problem where the agent's types are comonotonic. In fact, it is then without loss of generality to assume that $\theta^A = \theta^B$ because we can define $\tilde{u}^B_{\varepsilon}(y, \theta^A):=u^B(y, h(\theta^A; \varepsilon))$ and $\tilde{v}^B_{\varepsilon}(y, \theta^A) := v^B(y, h(\theta^A; \varepsilon))$ and study the screening problem with utility functions $(u^A, \tilde{u}^B_\varepsilon, v^A, \tilde{v}^B_\varepsilon)$ (which continues to satisfy all of our assumptions since $h$ is nondecreasing). 
 
 Therefore, to prove \Cref{thm:main}, it suffices to prove it for the perfectly correlated case where $\theta^A = \theta^B$. In what follows, we assume $\Theta$ is one-dimensional and write $\theta = \theta^A = \theta^B$.

\subsection{Reconstruction Lemma} \label{subsec:reconstruction}

Now, suppose that we are given a mechanism $(x, y, t): \Theta \rightarrow \mathcal{X} \times \mathcal{Y} \times \R$, where $\Theta$ is one-dimensional (i.e., $\theta^B=\theta^A = \theta$). Consider the following reconstruction: 
\[\tilde{x}(\theta) = x(\theta)\,, \qquad \tilde{y}(\theta) = y_0\,, \qquad \tilde{t}(\theta) = t(\theta) - u^B(y(\theta), \theta) \,.\]
The reconstruction maintains the same allocations for the productive component, involves no costly screening, and uses transfers to keep all types at their previous utility levels, \textit{assuming} they report truthfully. 

Assuming truthful reporting, this increases the total surplus while giving the same surplus to the agent, and therefore increases the principal's payoff. Indeed, the change in principal's payoff is  
\[\E\big [v^B(y_0, \theta) - v^B(y(\theta), \theta) - u^B(y(\theta), \theta) \big] = \E\big [-s^B(y(\theta), \theta)\big ] \geq 0\,.\]
The last inequality is strict if $\P(y(\theta) \neq y_0) > 0$ and the instruments are strictly costly.

Because the reconstruction maintains the utility for each type under truthful reporting, $(\tilde{x}, \tilde{y}, \tilde{t})$ satisfies all IR constraints. However, this mechanism is not necessarily IC. Indeed, suppose for illustration that $u^B(y, \,\cdot\,)$ is strictly increasing, and for some $\hat{\theta} > \theta$, $\IC[\theta \rightarrow \hat{\theta}]$ binds under $(x, y, t)$. Consider the same deviation under $(\tilde{x}, \tilde{y}, \tilde{t})$:
\begin{align*}  \label{eq:upIC}
    u^A(\tilde{x}(\theta), \theta)  - \tilde{t}(\theta)  &=   u^A(x(\theta), \theta) +  u^B(y(\theta), \theta) - t(\theta) \\
    &=   u^A(x(\hat{\theta}), \theta) +  u^B(y(\hat{\theta}), \theta) - t(\hat{\theta}) \\
    &<   u^A(x(\hat{\theta}), \theta) +  u^B(y(\hat{\theta}),\hat{\theta}) - t(\hat{\theta}) \\
    &=   u^A(\tilde{x}(\hat{\theta}), \theta) -  \tilde{t}(\hat{\theta})\,, \tag{4}
\end{align*}
where the first and the last line follow by construction, the second line uses the binding IC constraint, and the third line uses that $u^B(y, \,\cdot\,)$ is strictly increasing. Therefore, $\IC[\theta \rightarrow \hat{\theta}]$ is not satisfied under $(\tilde{x}, \tilde{y}, \tilde{t})$. 

This demonstrates that the reconstruction does not work directly. However, the same reasoning also shows that all downward IC constraints are still satisfied after this reconstruction. Indeed, consider a downward deviation $[\theta \rightarrow \hat{\theta}]$ for any $\hat{\theta} < \theta$:
\begin{align*} \label{eq:downIC}
    u^A(\tilde{x}(\theta), \theta)  - \tilde{t}(\theta)  &=   u^A(x(\theta), \theta) +  u^B(y(\theta), \theta) - t(\theta) \\
    &\geq   u^A(x(\hat{\theta}), \theta) +  u^B(y(\hat{\theta}), \theta) - t(\hat{\theta}) \\
    &\geq   u^A(x(\hat{\theta}), \theta) +  u^B(y(\hat{\theta}),\hat{\theta}) - t(\hat{\theta}) \\
    &=   u^A(\tilde{x}(\hat{\theta}), \theta) -  \tilde{t}(\hat{\theta})\,, \tag{5}
\end{align*}
where the first and the last line follow by construction, the second line follows from $(x, y, t)$ being IC, and the third line follows from that $u^B(y, \,\cdot\,)$ is nondecreasing. Therefore, $(\tilde{x}, \tilde{y}, \tilde{t})$ satisfies all downward IC constraints. 

Let $\tilde{\M}(\Theta)$ denote the space of measurable maps $\Theta \rightarrow \X \times \Y \times \R$ that are \textit{(i)} IR, \textit{(ii)} involve no costly screening, and \textit{(iii)} satisfy all downward IC constraints. 

The reconstruction argument gives the following lemma:
\begin{lemma}\label{lem:dom}
Consider any $(x, y, t) \in \M(\Theta)$. There exists some $(\tilde{x}, \tilde{y}, \tilde{t}) \in \tilde{\M}(\Theta)$ such that 
\[\E\big[v^A(x(\theta), \theta) + v^B(y(\theta), \theta) + t(\theta)\big]\leq \E\big[v^A(\tilde{x}(\theta), \theta) + v^B(\tilde{y}(\theta), \theta) + \tilde{t}(\theta)\big]\,.\]
If the instruments are strictly costly and $\P(y(\theta) = y_0) < 1$, then the above inequality is strict.   
\end{lemma}

By \Cref{lem:dom}, it is, therefore, always an \textit{upper bound} for the principal to optimize over $\tilde{\M}(\Theta)$.  Because $v^B(y_0, \theta) = u^B(y_0, \theta) = 0$, the principal then solves the following:
\begin{alignat*}{2}\label{eq:1d}
\sup_{(x, t):\text{ }\Theta \rightarrow \X \times \R, \text{ measurable}} \E[&v^A(x(\theta), \theta) + t(\theta)]  \tag{6} \\
\text{subject to}\quad u^A(x(\theta), \theta) - t(\theta) &\geq u^A(x(\hat{\theta}), \theta)  - t(\hat{\theta})  \quad && \text{for all } \theta > \hat{\theta}\,, \\
u^A(x(\theta), \theta) - t(\theta) &\geq 0 && \text{for all } \theta\,. 
\end{alignat*}
This problem is a one-dimensional screening problem except that all upward IC constraints are ignored. For future reference, we use $\eqref{eq:1d}^\dagger$ to denote the version of problem \eqref{eq:1d} with both the downward and upward IC constraints. 

If we can show that there exists $(x^*, t^*)$ solving problem \eqref{eq:1d} and satisfying also all upward IC constraints, then  \Cref{thm:main} follows. From now on, we drop the superscript $A$ whenever clear, as we will focus only on the productive component. 

\subsection{Downward Sufficiency Theorem} \label{subsec:dst}

\begin{theorem}[Downward sufficiency]\label{thm:dbind}
Consider any one-dimensional screening problem. Under the surplus condition, there exists an optimal solution to \eqref{eq:1d} that also satisfies all upward IC constraints. 
\end{theorem}

We sketch the proof of \Cref{thm:dbind} for the case of finite $\Theta$. The appendix provides details and extends the argument to the general case by approximation. The proof does not directly work with a continuous type space because, as discussed in \Cref{subsec:intuition_proof}, it necessarily relies on both the local and the nonlocal downward incentive constraints. It turns out that \textit{any} allocation rule can be implemented via transfers to satisfy all downward incentive constraints, but the transfer formula differs from the one implied by the Envelope theorem whenever the allocation rule is not monotone. Thus, to prove \Cref{thm:dbind}, we characterize the transfer formula under downward incentive constraints in full generality and then show that, while non-monotone allocation rules are feasible, they are suboptimal under the surplus condition. 

Suppose $|\Theta| = n < \infty$  and order types by $\theta_1 < \theta_2 < \cdots < \theta_n$. As before, let $\mu \in \Delta(\Theta)$ denote the distribution of $\theta$. In this finite-type case, we assume that $\mu$ has full support. Without loss of generality, suppose $0 \leq \theta_1$ and $\theta_n \leq 1$. A mechanism is specified by $(x_1, x_2, \dots, x_n)$ and $(t_1, t_2, \dots, t_n)$. The principal's problem is:
\begin{alignat*}{2} \label{eq:finite}
\max_{(x, t)\in \X^n \times \R^n} \sum_i \mu(\theta_i) (&v(x_i, \theta_i) + t_i)  \tag{7} \\
\text{subject to} \quad u(x_i, \theta_i) - t_i &\geq  u(x_j, \theta_i) - t_j  \quad && \text{for all } i > j\,, \\
u(x_i, \theta_i) - t_i &\geq 0 && \text{for all } i\,. 
\end{alignat*}

We replace $\sup$ with $\max$ as the existence of the solution is easy to see by compactness arguments. In this finite-type case, we prove a stronger claim than \Cref{thm:dbind} by showing that every optimal downward IC mechanism must satisfy all upward incentive constraints. The proof consists of two steps: We first show that for any fixed $x\in \mathcal{X}^n$, there exists a unique optimal transfer $t$ such that $(x, t)$ is downward IC (\textbf{Step 1}), and then show by contradiction that any optimal solution $(x, t)$ to \eqref{eq:finite} must also satisfy the upward IC constraints using a perturbation argument (\textbf{Step 2}).

We start with some carefully chosen definitions. Fix any allocation rule $x \in \mathcal{X}^n$. Let
\[S := \{i: x_i \geq x_j \text{ for all } j < i\} \cup \{n\}\]
be the \textit{\textbf{running maximum index set}} of $x$ (including the last index). A \textbf{$U$\textit{-shaped region}} $r$ is a set of indices of the form  $\{i, i+1, \dots, i'\}$ such that \textit{(i)} $i$ and $i'$ are two consecutive elements of $S$, and \textit{(ii)} $x_i > x_{i+1}$. By definition, there is a finite sequence $\mathcal{L}$ of $U$-shaped regions. Let $L:= |\mathcal{L}|$ be the number of $U$-shaped regions. We write $o_l$ and $d_l$ as the starting and end index of the $l$-th $U$-shaped region $r_l$. An \textit{\textbf{optimal downward transfer operator}} (or simply \textit{\textbf{transfer operator}}) is a map $\mathbf{T}: \mathcal{X}^n \rightarrow \R^n$ such that for any $x \in \mathcal{X}^n$, $\mathbf{T}[x]$ is an optimal solution to the following problem: 
\begin{alignat*}{2} \label{eq:tproblem}
\max_{ t \in \R^n} \sum_i \mu(\theta_i) (&v(x_i, \theta_i) + t_i)\,  \tag{8} 
\end{alignat*}
subject to the same downward IC and IR constraints as in \eqref{eq:finite}. That is, $\mathbf{T}[x]$ maps an allocation rule $x$ to the optimal transfer rule that implements $x$ in a downward IC fashion. At this stage, it is unclear whether such an operator $\mathbf{T}$ is defined for all $x \in \mathcal{X}^n$. 

\paragraph{Step 1.}\hspace{-2mm}We first show that such an operator exists and is unique. In particular, we explicitly characterize $\mathbf{T}$, which will be used in the perturbation argument in \textbf{Step 2}. For notational convenience, we write $\IC[i \rightarrow j]$ (or simply $[i \rightarrow j]$) and $\IR[i]$ as shorthand for $\IC[\theta_i \rightarrow \theta_j]$ and $\IR[\theta_i]$, respectively. 
\begin{claim}\label{claim:transfer} There exists a unique transfer operator $\mathbf{T}$. For any $x \in \mathcal{X}^n$, $\mathbf{T}[x]$ is the solution to the system of equations defined by the following constraints with equality: $\textnormal{IR}[1]$, and 
\[\textnormal{IC}\big[i \rightarrow \max\{j \in S: j < i\}\big]\,,\]
for all $i$. In particular, for all $i$, 
\[(\mathbf{T}[x])_i =   u(x_i, \theta_i) - \sum_{j \in S \cup \{i\}:\, j \leq i}\Big(u(x_{\max\{k \in S: k < j\}}, \theta_{j}) - u(x_{\max\{k \in S: k < j\}}, \theta_{\max\{k \in S: k < j\}})\Big)\,. \label{eq:t} \tag{9}\]
\end{claim}

In words, for an arbitrary $x \in \X^n$, there exists a unique optimal $t$ subject to that $(x, t)$ is IR and downward IC. Moreover, for a given $x$, the binding IC constraints for $t$ go from every index $i > 1$ to the largest element in $S$ that is strictly less than $i$. \Cref{fig:transfer} illustrates how the $U$-shaped regions and the binding constraints in \Cref{claim:transfer} are identified. The proof of \Cref{claim:transfer} is in the appendix. The key intuition behind \Cref{claim:transfer} is that for a fixed allocation rule, because of the increasing differences property of the agent's preferences, the binding IC constraints are such that each type $\theta_i$ must be indifferent from taking their own allocation and taking the \textit{maximum} allocation assigned to some type below them. 

\begin{figure}
    \centering
\hspace{-0.07\linewidth}
\begin{subfigure}[b]{0.48\linewidth} 
\centering
        \includegraphics[scale=0.9]{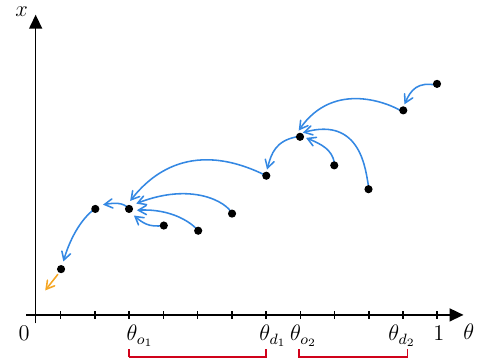}    \caption{Binding constraints for fixed allocations}
            \label{fig:transfer}
\end{subfigure}
\hspace{0.00\linewidth}
\begin{subfigure}[b]{0.48\linewidth}
    \centering
        \includegraphics[scale=0.9]{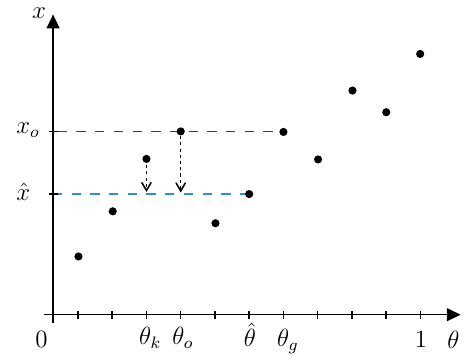}
 \caption{Perturbation of the allocations}
            \label{fig:allocation}
\end{subfigure}
       \caption{Illustration of the proof of \Cref{thm:dbind}}
\end{figure}

As the figure shows, equivalently, the local IC constraints bind until one gets into a $U$-shaped region (beginning with say index $o$) where the binding constraints then point toward $\theta_{o}$. Therefore, the formula \eqref{eq:t} can also be written as 
\[\label{eq:t2} (\mathbf{T}[x])_i = u(x_i, \theta_i) - \underbrace{\sum_{j=1,2,\dots, i-1: \ j \in Q} \Big( u(x_{j}, \theta_{j+1}) - u(x_{j}, \theta_{j}) \Big) }_\text{local} - \underbrace{\sum_{l=1,2,\dots, L: \  o_l < i} \Big( u(x_{o_l}, \theta_{\min(d_l, i)}) - u(x_{o_l}, \theta_{o_l}) \Big) }_\text{nonlocal} \,. \]
where $Q$ is the region where $x$ is monotone in an appropriate sense.\footnote{Formally, $Q = \big \{1\leq j \leq n: j \not \in r_l \text{ for all $l$, or } j = d_l \neq o_{l+1} \text{for some $l$} \big\}$.} The first sum arises from the binding local downward IC constraints, and the second sum arises from the binding nonlocal downward IC constraints. If $x$ is monotone, then \eqref{eq:t} will reduce to the standard transfer formula in one-dimensional screening models, in which only the \textit{local} downward IC constraints matter for transfers.\footnote{In particular, if there is a continuum of types, then it would coincide with the one given by the Envelope theorem.} However, importantly, we cannot rule out non-monotone allocation rules $x$ by implementability here, since we do not have upward incentive constraints.

\paragraph{Step 2.}\hspace{-2mm}Suppose for contradiction that there exists an optimal $(x, t)$ to \eqref{eq:finite} that does not satisfy some upward IC constraints. We construct a perturbation that strictly improves on $(x, t)$. There are two cases: \textit{(i)} $x$ is monotone and \textit{(ii)} $x$ is not monotone. The first case is simple. As noted before, when $x$ is monotone, $\mathbf{T}[x]$ reduces to the standard transfer formula and hence $(x, \mathbf{T}[x])$ must satisfy all incentive constraints including the upward ones. Then, $t$ cannot be identically equal to $\mathbf{T}[x]$. But that implies $(x, t)$ can be strictly improved by $(x, \mathbf{T}[x])$ by \Cref{claim:transfer}. 

Now, consider the second case where $x$ is not monotone. Then, there exists a $U$-shaped region. \Cref{fig:allocation} illustrates our perturbation. Specifically, it is constructed as follows. 

Let $r$ be the first $U$-shaped region, $o$ its starting index, and $d$ its end index. Let
\[g := \min \big\{j > o: x_j \geq x_o \big\}\]
denote the first index after $o$ with associated allocation no less than $x_o$. Put $g = n+1$ if the above set is empty. Then, either $g = d$ or $g = n+1$. Let
\[\hat{x} := \max \big\{x_j: o < j < g \big\}\]
denote the largest allocation for indices strictly between $o$ and $g$. We have $\hat{x}<x_o$. Let $j^* \in \{o+1, \dots, g-1\}$ be the first index achieving the above maximum and let $\hat{\theta} = \theta_{j^*}$. Let 
\[k := \min \big \{j:  x_j > \hat{x} \big \} \]
denote the first index whose associated allocation is strictly higher than $\hat{x}$. Since $x_o > \hat{x}$, we have $k \leq o$. Because $r$ is the first $U$-shaped region, we have $\hat{x}  <  x_{k} \leq x_{k+1} \leq \cdots \leq  x_o$. Consider the following perturbation $\tilde{x}$: for all $j \in \{1, \dots, n\}$, 
\[\tilde{x}_j := \begin{cases}
\hat{x} & \text{ if } j \in \{ k, k+1, \dots, o\}\,; \\
x_j & \text{ otherwise.} \\
\end{cases}\]

\begin{claim} \label{claim:perturb}
Under the surplus condition, $(\tilde{x}, \mathbf{T}[\tilde{x}])$ strictly increases the objective value of \eqref{eq:finite}. 
\end{claim}

The proof is in the appendix. The key intuition can be understood as follows. Suppose $k = o$ for simplicity. We explain why perturbing the allocation of $\theta_o$ downward (as in \Cref{fig:allocation}) is profitable. For the types within the interior of the $U$-shaped region, there is no higher type who wants to deviate to mimicking these types even if we perturb the allocations of these types (recall \Cref{fig:transfer}). This implies that, at least locally, when we perturb the allocations of these types, there is no information rent concern but only surplus concern. But then since we start from the optimal solution, for type $\hat{\theta}$ who is in the $U$-shaped region, the assigned allocation $\hat{x}$ must have generated weakly more surplus than the alternative $x_o$ which could have been assigned to $\hat{\theta}$. Since $\hat{x} < x_o$ and $\theta_o < \hat{\theta}$, the surplus condition implies that the same surplus comparison also holds for type $\theta_o$. But then it is profitable to perturb the allocation of $\theta_o$ downward from $x_o$ to $\hat{x}$, since that increases the surplus \textit{and} reduces the information rents collected by the types above $\theta_o$. 

By \Cref{claim:perturb}, we can strictly improve on $(x, t)$. But $(x, t)$ is assumed to be optimal, contradiction. So any optimal solution to \eqref{eq:finite} satisfies all upward IC constraints, proving \Cref{thm:dbind}. By \Cref{subsec:decompose,subsec:reconstruction}, \Cref{thm:main} then follows.

\section{Applications}\label{sec:app}

This section presents three applications. \Cref{subsec:csignal} shows an application to monopoly pricing with costly signals. \Cref{subsec:labor} shows an application to labor market screening. \Cref{subsec:bundling} shows how to apply our result to multiproduct pricing and optimal bundling. 

\subsection{Monopoly Pricing with Costly Signals} \label{subsec:csignal}

A monopolist sells a quality-differentiated spectrum of goods. A buyer of type $\theta^A \in \Theta^A$ receives utility $u^A(x, \theta^A)$ from consuming the good of quality $x \in \X$. The seller incurs a cost $C(x, \theta^A)$ to produce the good of quality $x$ for type $\theta^A$. Suppose $u^A(x, \theta^A)$ is nondecreasing in $\theta^A$ and has strict increasing differences, and that the surplus function $u^A(x, \theta^A) - C(x, \theta^A)$ satisfies our surplus condition. 

Besides offering a menu of products of different qualities and prices, the monopolist can make the offers contingent on various costly signals (e.g. waiting in line, collecting coupons, walking up stairs). A costly signal is represented by $y \in [0, 1]$. To obtain a signal $y$, a buyer of type $\theta^B \in \Theta^B$ incurs a cost $c(y, \theta^B)$ that is nonincreasing in $\theta^B$ with $c(0, \theta^B) = 0$ (so $\theta^B$ represents the willingness to endure various costly activities). 

\Cref{thm:main} then says that if $\theta^B$ is positively correlated with $\theta^A$ according to our notion, then the monopolist never makes more profits by using these costly signals.  
\begin{prop}\label{prop:pricing}
If $\theta^B$ is stochastically nondecreasing in $\theta^A$, then there exists a profit-maximizing mechanism that requires no costly signals. 
\end{prop}
As a consequence of \Cref{prop:pricing}, if we observe that a seller in practice uses these costly instruments, then we should expect that the consumers with higher willingness to pay for the seller's goods tend to incur higher costs to obtain the signals.

\subsection{Labor Market Screening} \label{subsec:labor}

A monopsonistic firm wants to hire a worker to perform a task. The firm gets a payoff $V(\theta^A)$ for hiring a worker of ability $\theta^A \in \Theta^A \subset \R$, where $V$ is continuous, nondecreasing in $\theta^A$. The worker suffers a cost $C(\theta^A)$ for performing the task, where $C$ is continuous, strictly decreasing in $\theta^A$. Let $x \in [0, 1]$ be the probability of hiring the worker. 

The firm can ask the applicant to obtain a credential. The applicant gets a negative payoff $\theta^B \cdot y \leq 0$ for obtaining a credential of level $y\in[0,1]$. Both $\theta^A$ and $\theta^B$ are the worker's private information. For a given wage level $w$, hiring probability $x$, and credential level $y$, the firm's payoff is $(V(\theta^A) - w)x$, and the worker's payoff is $(w - C(\theta^A))x + \theta^B y$. 

\begin{prop} \label{prop:labor}
If $\theta^B$ is stochastically nondecreasing in $\theta^A$, then there exists an optimal mechanism that does not require any credential. 
\end{prop}

This result is an immediate consequence of \Cref{thm:main}. In particular, the worker has positively correlated preferences between work and credential because a higher ability type finds work easier and also tends to find obtaining the credential easier. In contrast to the classic case of competitive wages \`{a} la \citet{Spence1973Job}, who shows that costly credentials can serve as effective screening devices in a model with multiple firms, our result shows that costly screening may not be needed in a monopsonistic market. 

The difference in results can be understood as follows. In \citet{Spence1973Job}, wages are competitive and pinned down by expected output, which leaves very little room to screen types via monetary payments, and hence all the screening has to occur via costly efforts. By contrast, in our model, wages are set by the monopsonistic firm and different types have the same outside options. In particular, the firm can potentially screen the types by setting a wage such that a low-ability type finds it unprofitable to accomplish the task.\footnote{To further illustrate that the difference is driven by whether the wages are set competitively, \Cref{app:comp} considers a version of our model with competitive wages and multiple screening instruments. It shows that costly screening can appear in equilibrium. Intuitively, competition between firms generates higher outside options for higher ability types, which makes the binding incentive constraints become the upward ones instead of the downward ones, leading to the costly instruments becoming useful under positive correlation of preferences. }

\subsection{Optimal Bundling} \label{subsec:bundling}

Beyond the direct implications discussed in \Cref{subsec:csignal} and \Cref{subsec:labor}, our result turns out to also yield new insights into the classic optimal bundling problem that may at first glance appear to be unrelated. In particular, we first show how to apply our result to obtain new results on nested bundling (\Cref{subsubsec:nested}), and then show how we can further recover a recent result in the special case of pure bundling (\Cref{subsubsec:pure}).

\subsubsection{Nested Bundling} \label{subsubsec:nested}

There are $G$ different items indexed from $1$ to $G$. A \textit{\textbf{bundle}} $b \in \mathcal{B} := 2^{\{1,\dots, G\}}$ is a set of items. A \textit{\textbf{menu}} $B \subseteq  \mathcal{B}$ is a set of bundles (which we assume includes $\emptyset$).\footnote{To simplify notation, we omit the inclusion of $\emptyset$ in a menu whenever it is clear from the context.} A menu $B$ is \textit{\textbf{nested}} if the bundles in $B$ can be totally ordered by set inclusion.

For this application, we assume one-dimensional types and deterministic mechanisms (both are relaxed in the pure bundling application). A consumer has private information about their type $\theta \in \Theta \subset \R$. The value of bundle $b$ for type $\theta$ is denoted by $v(b, \theta)$. 

We assume that $v(b, \theta)$ is nondecreasing in $b$ (in set-inclusion order) and continuous, strictly increasing in $\theta$ (with $v(\emptyset, \theta) = 0$). The seller wants to maximize revenue (for simplicity, we assume zero marginal costs). A deterministic allocation rule is denoted by $\theta \mapsto a(\theta) \in \{0, 1\}^{\mathcal{B}}$, where $\sum_b a^b = 1$; a transfer rule is denoted by $\theta \mapsto t(\theta) \in \R$.  A menu $B$ is \textit{\textbf{optimal among deterministic mechanisms}} if there exists an optimal IC and IR deterministic mechanism $(a, t)$ such that $a(\theta) \in B$ for all $\theta$. 

\begin{prop} \label{prop:generalnested}
Consider any nested menu $B$. Suppose that
\begin{itemize}
    \item[(i)] \hspace{-1mm}for any $b, b' \in B$ such that $b \subset b'$, $v(b', \theta) - v(b, \theta)$ is strictly increasing in $\theta$;
    \item[(ii)] \hspace{-1mm}\mbox{for~any~$b \not \in B$,~there~exists~$b' \in B$~such~that~$b \subset b'$~and~$v(b', \theta) - v(b, \theta)$~is~nonincreasing~in~$\theta$.}
\end{itemize}
Then $B$ is optimal among deterministic mechanisms. 
\end{prop}

\Cref{prop:generalnested} is a natural consequence of \Cref{thm:main} once one views selling the nested menu $B$ as the productive component, and selling any bundle $b \not\in B$ \textit{instead of} the larger bundle $b' \supset b$ in the menu $B$ as a costly instrument for screening the consumer's type. 

\begin{proof}[Proof of \Cref{prop:generalnested}]
Order the bundles in menu $B$ by $b_1 \subset b_2 \subset \cdots \subset b_m$. For $b \in B$, let $i(b)$ be its associated index. We map this bundling problem to our main model as follows. 

Let $\mathcal{X} := \{1, 2, \dots, m\}$. Let the agent's utility on the productive component be 
\[u^A(x, \theta) := v(b_x, \theta)\,.\]
Let 
\[\mathcal{Y} := \Big \{y \in  \{0, 1\}^{\mathcal{B}\backslash B}: \sum_{b \not \in B} y^b \leq 1 \Big\}\,.\]
By condition \textit{(ii)}, there exists some function $\beta: \mathcal{B}\backslash B\rightarrow B$ such that \textit{(i)} $b \subset \beta(b)$ and \textit{(ii)} $v(\beta(b), \theta) - v(b, \theta)$ is nonincreasing in $\theta$. Let the agent's utility on the costly component be 
\[u^B(y, \theta) := \sum_{b \not \in B} \big (v(b, \theta) - v(\beta(b), \theta) \big ) y^b \leq 0\,.\]
Let the principal's utility functions for both components be identically $0$. 

Now, for any original deterministic allocation $a$ that assigns $a^b = 1$ for $b \in B$, we can replicate it by letting $x = i(b)$ and $y = 0$. For any deterministic allocation $a$ that assigns $a^b = 1$ for $b \not\in B$, we can replicate it by letting $x = i(\beta(b))$ and $y^{b} = 1$ (and $y^{b'} = 0$ for $b' \neq b$). Thus, the principal's optimal value in this transformed screening problem must be weakly higher than in the bundling problem.

Because $v(b, \theta) - v(\beta(b), \theta)$ is nondecreasing in $\theta$ for all $b \not \in B$, we have $u^B(y, \theta)$ is nondecreasing in $\theta$ for all $y \in \mathcal{Y}$. By construction and condition \textit{(i)}, $u^A(x, \theta)$ is nondecreasing in $\theta$ and has strict increasing differences. Then, by \Cref{thm:main}, there exists an optimal mechanism with $y^*(\theta) = 0$ for all $\theta$ in this screening problem. But that is implementable in the original problem, corresponding to a nested bundling mechanism with menu $B$.
\end{proof}

To illustrate \Cref{prop:generalnested}, consider two items. Let $g(\theta)$, $h(\theta)$, and $v(\theta)$ be the value of item $1$, the value of item $2$, and the value of the bundle $\{1, 2\}$ for type $\theta$, respectively.

\begin{cor} \label{cor:nested}
If $v(\theta) - g(\theta)$ is strictly increasing, and $v(\theta) - h(\theta)$ is nonincreasing, then the nested menu $\{\{1\}, \{1, 2\}\}$ is optimal among deterministic mechanisms.
\end{cor}

For example, suppose that the seller is a cable TV service provider who can offer basic channels and sports channels. \Cref{cor:nested} says that if a higher income consumer has a higher incremental value for sports channels, and a lower incremental value for basic channels, then it suffices for the seller to consider a two-tier menu that features a basic package and a premium package that includes both basic and sports channels. This result holds for any distribution of $\theta$. Despite its seeming simplicity, it is a new condition for the optimality of nested bundling (see \Cref{app:nestedexample} for a parametric example).\footnote{For example, \citet{bergemann2022optimality} obtain conditions for nested bundling under additive values, whereas this application allows for non-additive values; \citet{yang2023nested} also obtains conditions for nested bundling with non-additive values but the conditions there depend on the type distribution.}

\subsubsection{Pure Bundling} \label{subsubsec:pure}

In the special case of pure bundling, where the menu $B$ consists of only the \textit{\textbf{grand bundle}} $\overline{b}:=\{1,\dots,G\}$, \Cref{prop:generalnested} says that: 
\begin{cor} \label{cor:pure}
If $v(\overline{b}, \theta)-v(b,\theta)$ is nonincreasing in $\theta$ for all $b \neq \emptyset$, then pure bundling is optimal among deterministic mechanisms. 
\end{cor}

However, in this special case, we can obtain a stronger result, in fact recovering a recent result of \citet{haghpanah2021pure} on the optimality of pure bundling. 

We generalize the bundling problem in \Cref{subsubsec:nested} by allowing for \textit{(i)} multidimensional types and \textit{(ii)} stochastic mechanisms. A consumer has private information about their type $v := (v^b)_{b \in \mathcal{B}}$ where $v^b$ is their value for bundle $b$. We assume $v^{b}\leq v^{\overline{b}}$ and $v^\emptyset = 0$. A stochastic allocation rule is denoted by $v \mapsto \alpha(v) \in \Delta(\mathcal{B})$; a transfer rule is denoted by $v \mapsto t(v) \in \R$.  We continue to assume zero costs. We say that \textit{\textbf{pure bundling is optimal}} if posting a single price for the grand bundle is optimal among all stochastic mechanisms.

\begin{prop}[\citealt{haghpanah2021pure}]\label{prop:pure}
 If $\big(\frac{v^b}{v^{\overline{b}}}\big)_{b \subset \overline{b}}$ is stochastically nondecreasing in $v^{\overline{b}}$, then pure bundling is optimal.
\end{prop}

We show how to obtain \Cref{prop:pure} by mapping the bundling problem to our main model (with a mapping slightly different from \Cref{subsubsec:nested}). Let $\theta^A = v^{\overline{b}}$ be the value of the grand bundle. For any proper bundle $b$, let $\theta^b = v^b - v^{\overline{b}}$ be the difference in values for bundle $b$ and the grand bundle $\overline{b}$.  In words, $\theta^b$ is the negative value for getting bundle $b$ instead of $\overline{b}$. Let $N = 2^G - 1$, and let $\theta^B = (\theta^1, \dots, \theta^{N})$ be the profile of the differences. 

We use $x: \Theta \rightarrow [0,1]$ to denote the \textit{initial} probability allocation of the grand bundle, and $y: \Theta \rightarrow [0,1]^{N}$ to denote the costly instruments as follows. An assignment $y^b \in [0, 1]$ represents assigning bundle $b$ with probability $y^b$ while \textit{decreasing} the probability of the grand bundle $\overline{b}$ also by $y^b$. The consumer's payoff can be rewritten as 
\[\theta^A x +  \theta^B \cdot y - t \,.\]
For any stochastic allocation $\alpha$, we can replicate it by setting 
\[x = \sum_{b} \alpha^b\,,  \qquad y^b = \alpha^b \text{ for all } b \neq \overline{b} \,.\]
Thus, the principal's optimal value in this transformed screening problem must be weakly higher than in the bundling problem. Moreover, since
\[\frac{v^b}{v^{\overline{b}}} = \frac{\theta^A + \theta^b}{\theta^A} = 1 + \frac{\theta^b}{\theta^A}\,,\]
we know that if $\big(\frac{v^b}{v^{\overline{b}}}\big)_{b \subset \overline{b}}$ is stochastically nondecreasing in $v^{\overline{b}}$, then $r^B := \frac{1}{\theta^A} \theta^B$ is stochastically nondecreasing in $\theta^A$. Note that $r^B$ is the vector of \textit{\textbf{marginal rates of substitution (MRS)}} between the productive and the costly components. This is not exactly our positive correlation condition, but in this special case of multiplicative preferences, the positive correlation between MRS and productive types suffices for the conclusion of \Cref{thm:main}, i.e., that the optimal mechanism involves no costly screening.\footnote{A reader may also wonder how this condition relates to negative correlation of item values, since \citet{adams1976commodity} show that with additive and perfectly negatively correlated values, pure bundling is optimal. With additive values, it can be seen as a generalization of \citet{adams1976commodity}: e.g., suppose $v^1 = \xi \cdot \theta$ and $v^2 = (1-\xi)\cdot \theta$ where $\xi, \theta \in [0, 1]$ are independent. The stochastic ratio monotonicity condition holds as $v^1/v^2$ is independent of $v^1+v^2$. In the special case of known $\theta$, this is \citet{adams1976commodity}. 
} But then by linearity, we know that the optimal mechanism is a posted price for $x = 1$ (\citealt{Myerson1981}; \citealt{riley1983optimal}), which corresponds to a pure bundling mechanism in the original bundling problem. The appendix provides details.
\begin{rmk}\label{rmk:generalratio}
Using the same logic, we can also generalize the result of \citet{haghpanah2021pure} to a multiple-good monopoly problem allowing for both
probabilistic bundling and quality discrimination (\`{a} la \citealt{mussa1978monopoly}). In particular, for each bundle $b$, a consumer has value $v^b$ for getting the highest quality version of the bundle with probability one. In addition to the stochastic bundling allocation rule $\alpha$ as before, the monopolist can also adjust the quality of each bundle, captured by a quality allocation rule $v \mapsto q(v) \in [0, 1]^{\mathcal{B}}$. A type-$v$ consumer's payoff is given by \[\sum_{b} \alpha^b q^b v^b  - t\,. \]
The monopolist incurs a cost to improve the quality of a bundle, with a payoff given by
\[-\sum_{b} \alpha^b C(q^b) + t\,,\]
where $C(\,\cdot\,)$ is a continuous, nondecreasing, and convex function on $[0,1]$ with $C(0)=0$. By the same logic described above, if $\big(\frac{v^b}{v^{\overline{b}}}\big)_{b \subset \overline{b}}$ is stochastically nondecreasing in $v^{\overline{b}}$, then an optimal mechanism exists and can be implemented by a menu of prices for different qualities of the grand bundle. See \Cref{app:quality} for details. The optimal mechanism here involves price discrimination in general but does so only along the vertical dimension by offering different qualities of the grand bundle.
\end{rmk}

\section{Conclusion} \label{sec:conclusion}

This paper studies the effectiveness of costly instruments in a general multidimensional screening model. The model consists of two components: a one-dimensional productive component and a multidimensional costly component. Our main result says that if the agent's preferences are positively correlated between the two components in a suitably defined sense, then the costly instruments are ineffective---the optimal mechanism simply screens the one-dimensional productive component. 

Our proof provides clear insights into why this result holds. For a given multidimensional mechanism, we first decompose the multidimensional type space into a collection of one-dimensional paths, and then show that on each path, we can reconstruct a one-dimensional mechanism, involving no costly screening, that satisfies all downward incentive constraints and improves on the original mechanism. Finally, we show that only downward incentive constraints are needed in any one-dimensional screening model that satisfies the surplus condition. The last step, what we call the downward sufficiency theorem, also uncovers a novel property of one-dimensional screening models.

Armed with this understanding, we also show how additional results follow naturally. With negatively correlated preferences, we show a partial converse. Using the perspective of screening with costly instruments, as applications, we obtain new insights into monopoly pricing, bundling, and labor market screening.

\newpage 
\appendix
\crefalias{section}{appendix}
\section{Omitted Proofs}\label{app:proof}

The proofs follow the order in which they are referenced in the main text, except that we first prove \Cref{thm:dbind} (downward sufficiency theorem), as it will be used in the proof of \Cref{thm:main}.

\subsection{Proof of \texorpdfstring{\Cref{thm:dbind}}{}}

We proceed as follows: \Cref{subsubsec:transfer} proves \Cref{claim:transfer}; \Cref{subsubsec:perturb} proves \Cref{claim:perturb}; \Cref{subsubsec:completion} completes the proof of \Cref{thm:dbind} by extending the finite-case result to the general case via approximation. 

\subsubsection{Proof of \texorpdfstring{\Cref{claim:transfer}}{}}\label{subsubsec:transfer}

Relax all the constraints in \eqref{eq:tproblem} except the ones indicated in \Cref{claim:transfer}. We will show the following: First, these constraints must bind in the relaxed problem. Second, these constraints binding imply all downward IC constraints and all IR constraints. Third, there is a unique solution to the system of equations defined by these binding constraints. \Cref{claim:transfer} then follows. 

Note that for every $i  > 1$, there is precisely one corresponding constraint $[i \rightarrow j]$ for some $j$. If this constraint does not bind at some mechanism $(x, t)$, then simply set $\tilde{t}_i = t_i + \epsilon$ for some $\epsilon > 0$ small enough so that $[i \rightarrow j]$ still holds. This clearly increases the objective. It also does not distort other IC constraints. Indeed, the only other IC constraints this change affects are of the form $[k \rightarrow i]$ for some $k$, but  
\[u(x_k, \theta_k) - t_k \geq u(x_i, \theta_k) - t_i \geq u(x_i, \theta_k) - \tilde{t}_i \,.\]
Therefore, all the IC constraints identified in \Cref{claim:transfer} must bind. Similarly, $\IR[1]$ binds. 

Given that these constraints bind, we now show that they imply all the downward IC constraints in \eqref{eq:tproblem}. We first collect two lemmas:
\begin{lemma}[Local to global]\label{lem:log} Let $i > j > k$. If $[i \rightarrow j]$, $[j \rightarrow k]$ hold and $x_j \geq x_k$, then $[i \rightarrow k]$. 
\end{lemma}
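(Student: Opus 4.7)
The plan is to chain the two given downward IC inequalities and bridge the gap between the deviator types $\theta_j$ and $\theta_i$ using Assumption~(1.2) (strict increasing differences of $u$) together with the monotonicity hypothesis $x_j \geq x_k$. The argument is essentially a one-line composition, so rather than an elaborate strategy I will simply describe how to carry it out.

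Starting from $[j \rightarrow k]$, rewrite it as
$$ t_j - t_k \leq u(x_j, \theta_j) - u(x_k, \theta_j). $$
Because $x_j \geq x_k$ and $\theta_i > \theta_j$, Assumption~(1.2) yields (strictly if $x_j > x_k$, trivially if $x_j = x_k$)
$$ u(x_j, \theta_j) - u(x_k, \theta_j) \leq u(x_j, \theta_i) - u(x_k, \theta_i), $$
so combining the two displays gives $u(x_j, \theta_i) - t_j \geq u(x_k, \theta_i) - t_k$. Chaining this with $[i \rightarrow j]$, namely $u(x_i, \theta_i) - t_i \geq u(x_j, \theta_i) - t_j$, we obtain $u(x_i, \theta_i) - t_i \geq u(x_k, \theta_i) - t_k$, which is exactly $[i \rightarrow k]$.

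I do not expect a genuine obstacle here; the only subtlety worth flagging is the essential role of the hypothesis $x_j \geq x_k$. Strict increasing differences yield an inequality in the needed direction only when the allocation being compared is larger at the index one wishes to deviate \emph{toward}; without $x_j \geq x_k$, the sign of the upgrade $u(x_j, \theta_j) - u(x_k, \theta_j) \mapsto u(x_j, \theta_i) - u(x_k, \theta_i)$ would flip and the chain would break. This is precisely why the lemma earns the name ``local to global,'' and anticipates why in \Cref{claim:transfer} the binding downward constraints on the monotonic stretches $Q$ are the adjacent ones $[(i+1) \rightarrow i]$, while inside each $U$-shaped region $r_l$ the active constraints must instead point all the way back to the origin $o_l$, where the allocation first drops.
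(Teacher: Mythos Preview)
Your proof is correct and is essentially the same argument as the paper's: both combine the two IC inequalities with the increasing-differences property of $u$ (applied to $x_j \geq x_k$, $\theta_i > \theta_j$) to reach $[i \rightarrow k]$. The only cosmetic difference is ordering---the paper first adds the two constraints and then invokes increasing differences, whereas you first upgrade $[j \rightarrow k]$ to type $\theta_i$'s perspective and then chain with $[i \rightarrow j]$---but these are the same manipulation.
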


\begin{lemma}[Global to local]\label{lem:gol} Let $i > j > k$. If $[i \rightarrow k]$, $[j \rightarrow k]$ bind and $x_j \leq x_k$, then $[i \rightarrow j]$. 
\end{lemma}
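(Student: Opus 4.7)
The plan is to write out the two binding constraints explicitly, substitute them into the inequality defining $[i\to j]$, and reduce the desired conclusion to a single algebraic inequality that is exactly an instance of the (weak) increasing differences property of $u^A$.

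First, I would unpack the binding hypotheses. From $[j\to k]$ binding I get $t_j = u(x_j,\theta_j) - u(x_k,\theta_j) + t_k$, and from $[i\to k]$ binding I get $t_i = u(x_i,\theta_i) - u(x_k,\theta_i) + t_k$. These give closed-form expressions for $t_i$ and $t_j$ in terms of $t_k$, which is convenient because $t_k$ will cancel in the comparison.

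Next, I would plug these into the constraint $[i\to j]$, i.e., $u(x_i,\theta_i) - t_i \ge u(x_j,\theta_i) - t_j$. Substituting and simplifying, the inequality to prove becomes
\[
u(x_k,\theta_i) - u(x_k,\theta_j) \;\ge\; u(x_j,\theta_i) - u(x_j,\theta_j).
\]
The common $t_k$ disappears, as does $u(x_i,\theta_i)$, leaving a pure comparison of cross-differences of $u$ at the two allocations $x_k$ and $x_j$.

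Finally, I would invoke Assumption (1.2) (strict increasing differences of $u^A$ in $(x,\theta^A)$). Since $i>j$ implies $\theta_i>\theta_j$ and the hypothesis gives $x_j\le x_k$, increasing differences yields
\[
u(x_k,\theta_i) - u(x_k,\theta_j) \;\ge\; u(x_j,\theta_i) - u(x_j,\theta_j),
\]
with strict inequality when $x_j<x_k$ and equality when $x_j=x_k$. Either way the reduced inequality holds, so $[i\to j]$ is satisfied. I do not expect a serious obstacle here: the lemma is essentially a bookkeeping consequence of single-crossing applied to the cross-type difference of the utility function, and the key move is simply noticing that the two binding constraints let one eliminate all mention of $t_i, t_j, t_k$ and of $x_i$, reducing to a statement purely about $x_j\le x_k$ and $\theta_j<\theta_i$.
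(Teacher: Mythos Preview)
Your proposal is correct and follows essentially the same approach as the paper: both arguments use the two binding constraints to eliminate the transfers and reduce $[i\to j]$ to the cross-difference inequality $u(x_k,\theta_i)-u(x_k,\theta_j)\ge u(x_j,\theta_i)-u(x_j,\theta_j)$, which is then an immediate consequence of increasing differences together with $x_j\le x_k$ and $\theta_j<\theta_i$. The only cosmetic difference is that the paper subtracts the two binding equations directly, whereas you first solve each for $t_i$ and $t_j$ and then substitute; the resulting algebra is identical.
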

\Cref{lem:log} is standard; it follows from a revealed-preference argument using the single-crossing property of $u$. \Cref{lem:gol} appears to be new; it requires two binding IC constraints and follows from a revealed-preference argument that subtracts the two constraints. The proofs of these lemmas are deferred to the end.  

We show all downward IC constraints are satisfied by induction on the number of $U$-shaped regions $L$. When $L = 0$, all downward IC constraints hold by successively applying \Cref{lem:log} and building up from the adjacent local downward constraints. Suppose the claim holds for $L - 1$. Let us denote the last region as $r$ with starting index $o$ and end index $d$. By the inductive hypothesis, all downward IC constraints $[i \rightarrow j]$ are satisfied if $j < i \leq o$. We divide the remaining pairs $(j, i)$ with $j < i$ into two cases: 
\begin{itemize}
    \item[] Case (1): $o \leq j < i$. We make the following observations:
    \begin{itemize}
        \item[(a)] if $o \leq j < i \leq d$, then $[i \rightarrow j]$ follows by the binding IC constraints $[i \rightarrow o]$, $[j \rightarrow o]$, $x_j \leq x_o$, and \Cref{lem:gol};
        \item[(b)] if $d \leq j < i$, then $[i \rightarrow j]$ follows by successively applying \Cref{lem:log};
        \item[(c)] if $j < d < i$, then $[i \rightarrow j]$ follows by $[i \rightarrow d]$ from (b), $[d \rightarrow j]$ from (a), $x_d \geq x_j$, and \Cref{lem:log}. 
    \end{itemize}
    \item[] Case (2): $j < o < i$. Note that $x_j \leq x_o$ for all $j < o$. Then, $[i \rightarrow j]$ follows by $[i \rightarrow o]$ from Case (1), $[o \rightarrow j]$ from the inductive hypothesis, $x_o \geq x_j$, and \Cref{lem:log}.
\end{itemize}

Together, these cover all the downward IC constraints and prove the inductive step. The IR constraints follow easily from $\IR[1]$ and $\IC[i \rightarrow 1]$ and that $u(x, \,\cdot\,)$ is nondecreasing. 

The binding constraints define a system of $n$ equations for $t$. With some calculations, it is not hard to see that these equations can be solved successively starting from the lowest one. In particular, by induction, the solution is uniquely defined by \eqref{eq:t}. 

\vspace{0.5cm}

\begin{proof}[Proof of \Cref{lem:log}]
Write out $[i \rightarrow j]$ and $[j \rightarrow k]$: 
\[ u(x_i, \theta_i) - t_i \geq u(x_j, \theta_i) - t_j \,;\]
\[ u(x_j, \theta_j) - t_j \geq u( x_k, \theta_j) - t_k \,.\]
Adding these two yields 
\[ u(x_i, \theta_i) - t_i +  u(x_j, \theta_j) - t_j\geq u( x_j, \theta_i) - t_j + u( x_k, \theta_j) - t_k\,.\]
Hence,
\[ u(x_i, \theta_i) - t_i  \geq (u(x_j, \theta_i) + u(x_k, \theta_j) -  u(x_j, \theta_j)) - t_k \,.\]
Using $x_j \geq x_k$, $\theta_i > \theta_j$, and the strict increasing differences property of $u$, we have 
\[u(x_j, \theta_i) + u(x_k, \theta_j) -  u(x_j, \theta_j) \geq u(x_k, \theta_i)\,.\]
Thus $[i \rightarrow k]$ follows. 
\end{proof}
\vspace{0.5cm}

\begin{proof}[Proof of \Cref{lem:gol}]
Write out the binding constraints $[i \rightarrow k]$ and $[j \rightarrow k]$: 
\[ u(x_i, \theta_i) - t_i = u(x_k, \theta_i) - t_k \,;\]
\[ u(x_j, \theta_j) - t_j = u(x_k, \theta_j) - t_k \,.\]
Subtracting these two yields 
\[ u(x_i, \theta_i) -  u(x_j, \theta_j) - t_i  = u(x_k, \theta_i) -u(x_k, \theta_j) - t_j \,.\]
Hence,
\[u(x_i, \theta_i)  - t_i  =  (u(x_j, \theta_j) + u(x_k, \theta_i) -u(x_k, \theta_j)) - t_j \,.\]
Using $x_k \geq x_j$, $\theta_i > \theta_j$, and the strict increasing differences property of $u$, we have 
\[u(x_j, \theta_j) + u(x_k, \theta_i) -u(x_k, \theta_j) \geq u(x_j, \theta_i) \,.\]
Thus $[i \rightarrow j]$ follows.
\end{proof}

\subsubsection{Proof of \texorpdfstring{\Cref{claim:perturb}}{}}\label{subsubsec:perturb}
The main difficulty here is that $\mathbf{T}$ is a complicated operator because of the binding nonlocal incentive constraints. The perturbed allocation $\tilde{x}$ is constructed in such a way that it maintains the \textit{form} of $\mathbf{T}$ in the following sense. Define $\varphi: \mathcal{P}(\{1,\dots, n\}) \times \mathcal{X}^n \rightarrow \R^n$ by 
\[(\varphi(K, a))_i := u(a_i, \theta_i) - \sum_{j \in K \cup \{i\}:\, j \leq i}\Big(u(a_{\max\{k \in K: k < j\}}, \theta_j) - u(a_{\max\{k \in K: k < j\}}, \theta_{\max\{k \in K: k < j\}})\Big)\]
for all $i = 1,\dots, n$, where $\mathcal{P}(\,\cdot\,)$ is the power set operator. For any allocation rule $a \in \mathcal{X}^n$, let $S_a$ be the running maximum index set as defined earlier. Then, by \eqref{eq:t}, $\mathbf{T}[a] = \varphi(S_a, a)$. 

\begin{lemma}\label{lem:form}
Consider any two allocation rules $a, b \in \mathcal{X}^n$ with running maximum index sets $S_a, S_b$. Suppose $S_a \subseteq S_b$, and $b_i = b_{\max\{j \in S_a: j < \min(S_b \backslash S_a)\}}$ for all $i \in S_b \backslash S_a$.  Then,
\[\mathbf{T}[b] = \varphi(S_b, b) = \varphi(S_a, b)\,.\]
\end{lemma}

The proof of this lemma is deferred to the end. Note that by construction $x$, $\tilde{x}$ always satisfy the conditions in \Cref{lem:form}. Applying \Cref{lem:form} to $x$, $\tilde{x}$ gives $\mathbf{T}[\tilde{x}] = \varphi(S, \tilde{x})$, where $S$ is the running maximum index set of $x$. We show that the objective of \eqref{eq:finite}, after plugging in $\varphi(S, \,\cdot\,)$, weakly increases on the parts involving $x_k, x_{k+1}, \dots, x_{o-1}$ (which may be an empty set) and strictly increases on the parts involving $x_o$ (which always exist). 

Fix any $j \in \{k,k+1,\dots,o-1\}$. Plugging $\varphi(S, \,\cdot\,)$ into the objective of \eqref{eq:finite} and collecting terms involving $x_j$ gives 
\[\label{eq:virtual} s(x_j, \theta_j)\mu(\theta_j) - \big(u(x_j, \theta_{j+1}) -  u(x_j, \theta_{j})\big)\sum_{i > j} \mu(\theta_i) \,.\tag{A.1} \]
Now consider the terms involving $x_{j^*}$. Because $o < j^* < g$, there is no IC constraint pointing toward $j^*$ by \Cref{claim:transfer}. Therefore, there is only one such term: 
\[s(x_{j^*}, \theta_{j^*}) \mu(\theta_{j^*})\,.\]
Note that $x_j \in \X$ is feasible to assign to $\theta_{j^*}$. Moreover, since $x_j \leq x_o$, doing so maintains the form of $\mathbf{T}$ by \Cref{lem:form}, and thus generates a payoff also according to the above formula. The fact that $x$ is optimal then implies 
\[s(\hat{x}, \hat{\theta}) \geq s(x_j, \hat{\theta})\,;\]
that is, 
\[ s(x_j, \hat{\theta}) - s(\hat{x}, \hat{\theta})\leq 0\,.\]
Because $x_j > \hat{x}$ and $ \theta_j < \hat{\theta}$,
by the surplus condition on $s$, 
\[\label{eq:surplus} s(x_j, \theta_j) - s(\hat{x}, \theta_j)\leq 0 \,.\tag{A.2}\]
Moreover, because $x_j > \hat{x}$, by the strict increasing differences property of $u$, 
\[\label{eq:util}  u(x_j, \theta_{j+1}) -  u(x_j, \theta_{j}) > u(\hat{x}, \theta_{j+1}) -  u(\hat{x}, \theta_{j})\,. \tag{A.3}\]
Combining \eqref{eq:surplus} and \eqref{eq:util} gives
\[s(x_j, \theta_j)\mu(\theta_j) - \big(u(x_j, \theta_{j+1}) -  u(x_j, \theta_{j})\big)\sum_{i > j} \mu(\theta_i) \leq s(\hat{x}, \theta_j)\mu(\theta_j) - \big(u(\hat{x}, \theta_{j+1}) -  u(\hat{x}, \theta_{j})\big)\sum_{i > j} \mu(\theta_i)\,, \]
proving that the part of the objective involving $x_j$ increases.

Because this holds for all $j \in\{ k, k+1, \dots, o-1 \}$, to conclude our proof, it remains to show that the part of the objective involving $x_o$ strictly increases. Plugging $\varphi(S, \,\cdot\,)$ into \eqref{eq:finite} and collecting terms involving $x_o$ gives 
\[s(x_o, \theta_o) \mu(\theta_o) -  \sum_{i=o+1}^g \mu(\theta_i)\big (u(x_o, \theta_{i})- u(x_o, \theta_o) \big ) - \big (u(x_o, \theta_g) - u(x_o, \theta_o) \big)\sum_{i>g} \mu(\theta_i)\,. \]
By the same argument as the previous case, we have 
\[s(x_o, \theta_o) \leq s(\hat{x}, \theta_o)\,.\]
For any $i > o$, by the strict increasing differences property of $u$, 
\[u(x_o, \theta_{i})- u(x_o, \theta_o)  > u(\hat{x}, \theta_{i})- u(\hat{x}, \theta_o)\,. \]
Together they imply
\begin{align*}
s(x_o, \theta_o) \mu(\theta_o) -  \sum_{i=o+1}^g \mu(\theta_i)\big (u(x_o, \theta_{i})- u(x_o, \theta_o) \big ) - \big (u(x_o, \theta_g) - u(x_o, \theta_o) \big)\sum_{i>g} \mu(\theta_i) \\
< s(\hat{x}, \theta_o) \mu(\theta_o) -  \sum_{i=o+1}^g \mu(\theta_i)\big (u(\hat{x}, \theta_{i})- u(\hat{x}, \theta_o) \big ) - \big (u(\hat{x}, \theta_g) - u(\hat{x}, \theta_o) \big)\sum_{i>g} \mu(\theta_i)\,,
\end{align*}
where the strict inequality also uses that $\mu$ has full support. 
\vspace{0.5cm}
\begin{proof}[Proof of \Cref{lem:form}]
Fix any subset $K \subset \{1, \dots, n\}$, any index $k \not\in K$, and any allocation rule $a \in \R^n$. We claim that if $a_k = a_{\max\{j \in K: j < k\}}$, then
\[\varphi(K \cup \{k\}, a) = \varphi(K, a)\,.\]
Let $i, i'$ be the two consecutive indices in $K$ such that $i < k < i'$. Note that for any $j \leq k$, \[(\varphi(K \cup \{k\}, a))_j = (\varphi(K, a))_j\,,\]
since $(K \cup \{k\}) \cap \{j': j' < j\} = K \cap \{j': j' < j\} $. For any $k < j \leq i'$, we can write 
\[u(a_k, \theta_j) - u(a_k, \theta_{k}) + u(a_i, \theta_k) - u(a_i, \theta_{i}) =  u(a_k, \theta_j)   - u(a_i, \theta_{i}) =   u(a_i, \theta_j)   - u(a_i, \theta_{i})\,,\]
and $i = \max\{j' \in K: j' < j\}$.  Thus $(\varphi(K \cup \{k\}, a))_j = (\varphi(K, a))_j$ for any $k < j \leq i'$. Also, the fact that the above holds for $j = i'$ implies that $(\varphi(K \cup \{k\}, a))_j = (\varphi(K, a))_j$ for any $j > i'$.  

Now, write $S_b = S_a \cup \{k_1, \dots, k_m\}$ with $k_1 \leq \dots \leq k_m$. By assumption, $b_{k_m} = \dots = b_{k_1} = b_{\max \{j\in S_a: j < k_1\}}$. Then, by the definition of $S_b$, for all $q = 2, \dots, m$, we have 
\[b_{k_q} = b_{\max\{j \in S_a \cup \{k_1, \dots, k_{q-1}\}: j < k_q\}}\,. \]
Thus, we can repeatedly apply the result from the previous paragraph and obtain 
\[\mathbf{T}[b] = \varphi(S_a \cup \{k_1, \dots, k_m\}, b) = \varphi(S_a \cup \{k_1, \dots, k_{m-1}\}, b) = \dots =  \varphi(S_a \cup \{k_1\}, b)  = \varphi(S_a, b)\,,\]
which proves the lemma. 
\end{proof}

\subsubsection{Completion of Proof of \texorpdfstring{\Cref{thm:dbind}}{}}\label{subsubsec:completion}

We complete the proof of \Cref{thm:dbind} by extending the finite-case result (see \Cref{subsec:dst}) to the general type space $\Theta$ via approximation. To ease the reading, we first sketch the general argument, prove the technical lemmas, and then finish the formal proof. 

\paragraph{Sketch of the Argument.}\hspace{-2mm}We first give a sketch of the argument here and then provide the formal proof. Let $\mu \in \Delta(\Theta)$ denote the distribution on $\Theta$. Recall that $\eqref{eq:1d}^\dagger$ denotes the version of program $\eqref{eq:1d}$ with all IC constraints (both downward and upward). Let $V(\Theta, \mu)$ denote the optimal value of $\eqref{eq:1d}^\dagger$ given $(\Theta, \mu)$. We show that $V(\Theta, \mu)$ equals the optimal value of $\eqref{eq:1d}$. Suppose, for contradiction, there exists some $(\hat{x}, \hat{t})$ feasible for $\eqref{eq:1d}$ such that  
\[\label{eq:contrad} V(\Theta, \mu) < \E^\mu[v(\hat{x}(\theta), \theta) + \hat{t}(\theta)] \,.\tag{A.4}\]
We first construct an appropriate sequence $\{(\Theta^{(n)}, \mu^{(n)})\}$ approximating $(\Theta, \mu)$. 
\begin{lemma}[Approximation]\label{lem:app}
Suppose $v(x, \theta)$ is Lipschitz continuous on $\X \times \Theta$. Then, there exists a sequence $\{(\Theta^{(n)}, \mu^{(n)})\}$ with $\Theta^{(n)}\subseteq \Theta$ finite and $\mu^{(n)}\in \Delta(\Theta^{(n)})$ full support such that 
\begin{itemize}
    \item[(i)]  $\mu^{(n)}  \rightarrow_w \mu$\,;
    \item[(ii)] $\displaystyle \limsup_{n \rightarrow \infty} V(\Theta^{(n)}, \mu^{(n)}) \leq V(\Theta, \mu) $\,.
\end{itemize}
\end{lemma}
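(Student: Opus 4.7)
The plan is to build $\Theta^{(n)}$ as a refining sequence of finite nets inside $\Theta^A$, set $\mu^{(n)}$ to be a pushforward of $\mu$, and then lift any nearly optimal mechanism on $(\Theta^{(n)}, \mu^{(n)})$ to an IC--IR mechanism on $\Theta^A$ whose value falls short of $V(\Theta^{(n)}, \mu^{(n)})$ by at most $o(1)$. Fix $\epsilon_n \downarrow 0$ and let $\Theta^{(n)} \subseteq \Theta^A$ be a finite $\epsilon_n$-net containing $\min \Theta^A$ and $\max \Theta^A$. Define the measurable projection $\pi_n : \Theta^A \to \Theta^{(n)}$ sending each $\theta$ to the largest point of $\Theta^{(n)}$ no greater than $\theta$, and set $\mu^{(n)} := (\pi_n)_* \mu$, pruning any grid point of zero pushforward mass so that $\mu^{(n)}$ has full support on the (still finite) $\Theta^{(n)}$. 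For (i), any bounded continuous $f$ on $\Theta^A$ satisfies $\int f \, d\mu^{(n)} = \int f \circ \pi_n \, d\mu \to \int f \, d\mu$ by bounded convergence, since $\pi_n(\theta) \to \theta$ pointwise and $f$ is continuous.

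For (ii), fix $\eta_n \downarrow 0$ and pick an $\eta_n$-optimal mechanism $(x^{(n)}, t^{(n)})$ for $\eqref{eq:1d}^\dagger$ on $(\Theta^{(n)}, \mu^{(n)})$; this exists by finiteness. By the standard monotonicity result for one-dimensional screening under Assumption (1.2), the allocation $x^{(n)}$ is nondecreasing. Extend it to $\tilde{x}^{(n)} := x^{(n)} \circ \pi_n$, still a nondecreasing step function on $\Theta^A$, and define $\tilde{t}^{(n)}$ via the standard envelope formula anchored at the lowest type's indirect utility under $(x^{(n)}, t^{(n)})$. Assumption (1.2) and monotonicity of $\tilde{x}^{(n)}$ then ensure $(\tilde{x}^{(n)}, \tilde{t}^{(n)})$ is IC and IR on $\Theta^A$, and it coincides pointwise with $(x^{(n)}, t^{(n)})$ on $\Theta^{(n)}$.

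To compare values, write the principal's payoff from type $\theta$ under $(\tilde{x}^{(n)}, \tilde{t}^{(n)})$ as $s(\tilde{x}^{(n)}(\theta), \theta) - U_{\tilde{x}^{(n)}}(\theta)$, where $U_{\tilde{x}^{(n)}}$ is the envelope indirect utility. Since $\tilde{x}^{(n)}(\theta) = x^{(n)}(\pi_n(\theta))$ and $|\theta - \pi_n(\theta)| \le \epsilon_n$, Lipschitz continuity of $v$ together with uniform continuity of $u^A$ on the compact set $\X \times \Theta^A$ yield $|s(\tilde{x}^{(n)}(\theta), \theta) - s(x^{(n)}(\pi_n(\theta)), \pi_n(\theta))| \le \omega(\epsilon_n)$ and an analogous bound on $|U_{\tilde{x}^{(n)}}(\theta) - U^{(n)}(\pi_n(\theta))|$, where $\omega(\epsilon) \to 0$. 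Integrating against $\mu$ and using $\mu^{(n)} = (\pi_n)_* \mu$ yields $\E^\mu[v(\tilde{x}^{(n)}(\theta), \theta) + \tilde{t}^{(n)}(\theta)] \ge V(\Theta^{(n)}, \mu^{(n)}) - \eta_n - \omega(\epsilon_n)$. Since the left side is feasible for $\eqref{eq:1d}^\dagger$ on $(\Theta, \mu)$, it is bounded above by $V(\Theta, \mu)$; letting $n \to \infty$ gives (ii).

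The main obstacle is uniformly controlling the envelope indirect utility $U_{\tilde{x}^{(n)}}$: the envelope formula involves integrals of marginal utilities over an allocation that jumps discontinuously at grid points, and one must verify, via uniform continuity of $u^A$ and monotonicity of $\tilde{x}^{(n)}$, that these integrals differ from their discrete counterparts by at most $\omega(\epsilon_n)$ uniformly in $\theta$. The measurability and pruning details needed to ensure $\mu^{(n)}$ has full support while preserving weak convergence are routine but require some care not to disturb the estimates above.
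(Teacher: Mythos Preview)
Your construction of $(\Theta^{(n)},\mu^{(n)})$ as a refining net together with the pushforward $\mu^{(n)}=(\pi_n)_*\mu$ is essentially the paper's construction (the paper partitions $[0,1)$ into intervals of width $1/n$, takes the leftmost type in each nonempty cell, and assigns it the cell's mass). The argument for weak convergence is also the same in spirit.

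Where you diverge is in the extension step, and this is where you create the ``main obstacle'' you identify. You extend $x^{(n)}$ as a step function but then define $\tilde{t}^{(n)}$ via the envelope formula, which forces you to control integrals of marginal utilities. Two issues arise. First, the paper assumes only continuity and strict increasing differences of $u^A$, not differentiability in $\theta$, so the envelope integral is not obviously well-defined. Second, your claim that $(\tilde{x}^{(n)},\tilde{t}^{(n)})$ coincides with $(x^{(n)},t^{(n)})$ on $\Theta^{(n)}$ requires that the finite mechanism have binding local downward IC; this holds for the \emph{optimal} finite mechanism (by the paper's Claims~1 and~2), but you only took an $\eta_n$-optimal one.

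The paper sidesteps all of this by also extending $t^{(n)}$ as a step function: set $\bar{t}^{(n)}=t^{(n)}\circ\pi_n$. Incentive compatibility is then just the statement that any type $\theta\in[\theta_i^{(n)},\theta_{i+1}^{(n)})$ facing the finite menu $\{(x_j^{(n)},t_j^{(n)})\}_j$ chooses option $i$; this follows from increasing differences and the fact that in the optimal finite mechanism local downward IC binds (so $\theta_{i+1}^{(n)}$ is indifferent between options $i$ and $i{+}1$). With both $\bar{x}^{(n)}$ and $\bar{t}^{(n)}$ constant on each cell, the difference between $\E^\mu[v(\bar{x}^{(n)}(\theta),\theta)+\bar{t}^{(n)}(\theta)]$ and $\E^{\mu^{(n)}}[\cdot]$ comes \emph{only} from the $\theta$-argument of $v$, and Lipschitz continuity of $v$ gives a clean $K/n$ bound in one line. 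Your envelope-utility obstacle simply disappears.
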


Suppose for a moment that $\hat{x}, \hat{t}$ are continuous on $\Theta$ and $v$ is Lipschitz continuous. Note that $(\hat{x}, \hat{t})$ restricted to $\Theta^{(n)}$ is a feasible solution to the finite-type version of \eqref{eq:1d} with $(\Theta^{(n)}, \mu^{(n)})$. By Step 2, we have $V(\Theta^{(n)}, \mu^{(n)}) \geq \E^{\mu^{(n)}}[v(\hat{x}(\theta), \theta) + \hat{t}(\theta)]$. Because $v(\hat{x}(\theta), \theta) + \hat{t}(\theta)$ is a bounded continuous function on $\Theta$, using \Cref{lem:app} and taking limits on both sides of the above, we have
\[V(\Theta, \mu) \geq \limsup_{n \rightarrow \infty} V(\Theta^{(n)}, \mu^{(n)}) \geq  \limsup_{n \rightarrow \infty} \E^{\mu^{(n)}}[v(\hat{x}(\theta), \theta) + \hat{t}(\theta)] = \E^\mu[v(\hat{x}(\theta), \theta) + \hat{t}(\theta)]\,, \]
contradicting \eqref{eq:contrad}. However, the situation is more delicate in general. The actual proof relies on the Stone–Weierstrass theorem and Lusin’s theorem. 

Finally, to conclude \Cref{thm:dbind}, it suffices to show the existence of an optimal solution to the full IC program $\eqref{eq:1d}^\dagger$. Even though this is a standard one-dimensional problem, the existence result appears to be new at this level of generality. 
\begin{lemma}[Existence]\label{lem:exist}
Any one-dimensional screening problem has a solution.
\end{lemma}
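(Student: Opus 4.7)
\textbf{Proof plan for Lemma~\ref{lem:exist}.} The plan is to parameterize IC+IR mechanisms by a nondecreasing allocation rule $x$ together with a scalar $U_0 \geq 0$ representing the lowest type's payoff, and then to combine Helly's selection theorem with upper semi-continuity of the reduced objective.

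First I would use strict increasing differences (Assumption~(1.2)) to show that any IC mechanism has a nondecreasing $x:\Theta^A\to\X$: adding the two IC constraints at $\theta>\hat\theta$ gives
\[
u^A(x(\theta),\theta) - u^A(x(\hat\theta),\theta) \geq u^A(x(\theta),\hat\theta) - u^A(x(\hat\theta),\hat\theta),
\]
and strict increasing differences rules out $x(\theta)<x(\hat\theta)$. Subtracting the two IC constraints yields the sandwich bound
\[
u^A(x(\hat\theta),\theta) - u^A(x(\hat\theta),\hat\theta) \leq U(\theta) - U(\hat\theta) \leq u^A(x(\theta),\theta) - u^A(x(\theta),\hat\theta),
\]
where $U(\theta) := u^A(x(\theta),\theta)-t(\theta)$. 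Continuity of $u^A$ on the compact $\X\times\Theta^A$ then makes $U$ uniformly continuous and, via the standard envelope-plus-monotonicity equivalence, determined up to a $\mu$-null set by $x$ and $U_0 := U(\min\Theta^A)$; $t$ is reconstructed from $(x,U_0)$ and IR enforces $U_0\geq 0$. Substituting the induced $t$ into the principal's objective gives a functional $J(x,U_0)$.

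Next I would invoke Helly's selection theorem: the nondecreasing functions $x:\Theta^A\to\X$ are sequentially compact in the topology of pointwise convergence at continuity points of the limit, hence $\mu$-a.e.\ convergence. IR and the boundedness of the supremum confine $U_0$ to a bounded interval. From any maximizing sequence $\{(x_n,U_{0,n})\}$ I would extract a subsequence with $x_n\to x^*$ $\mu$-a.e.\ and $U_{0,n}\to U_0^*$. Continuity of $u^A$ and $v^A$ on the compact domain, together with uniform boundedness, allow dominated convergence and give $J(x_n,U_{0,n})\to J(x^*,U_0^*)$. Since $x^*$ is nondecreasing and $U_0^*\geq 0$, the envelope reconstruction of $t^*$ yields an IC+IR mechanism attaining the supremum.

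The hard part will be handling the envelope when $u^A$ is only continuous, not differentiable in $\theta$. The sandwich bound above is the key workaround: it controls the variation of $U$ by a continuous function on a compact set uniformly across $n$, which suffices both for the passage to the limit and for verifying that $(x^*,t^*)$ satisfies all IC constraints (monotonicity of $x^*$ together with the sandwich reconstruction of $t^*$ gives IC in both directions). Once this envelope-without-differentiability step is dispatched, compactness of monotone functions and continuity of the integrand make the rest classical.
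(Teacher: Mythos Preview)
Your reduction to a parameter pair $(x,U_0)$ has a genuine gap: in the stated generality, the indirect utility $U$ is \emph{not} determined by the allocation $x$ and the scalar $U_0$. The sandwich bound you derive,
\[
u^A(x(\hat\theta),\theta)-u^A(x(\hat\theta),\hat\theta)\ \leq\ U(\theta)-U(\hat\theta)\ \leq\ u^A(x(\theta),\theta)-u^A(x(\theta),\hat\theta),
\]
controls the variation of $U$ but does not pin it down when $x$ is discontinuous or when $\Theta^A$ is not an interval. A two-type example already shows the slack: with $\Theta^A=\{0,1\}$, $\X=\{0,1\}$, $u^A(x,\theta)=\theta x$, and $x(0)=0$, $x(1)=1$, the sandwich gives $0\leq U(1)-U(0)\leq 1$, so every $U(1)\in[U_0,U_0+1]$ is consistent with IC. The ``standard envelope-plus-monotonicity equivalence'' you invoke requires either differentiability of $u^A$ in $\theta$ or an interval type space with some absolute-continuity structure, neither of which is assumed here. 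Consequently there is no well-defined reduced objective $J(x,U_0)$, and your limiting argument has nothing to converge to.

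The paper avoids this problem by \emph{not} attempting to eliminate $U$. It observes that, for any IC mechanism, both $x(\cdot)$ and $U(\cdot)$ are nondecreasing (the latter from IR and the one-sided IC), bounds $t$ into a compact interval, and then applies a Helly-type selection theorem \emph{twice}---once to extract a pointwise-convergent subsequence of $x^{(n)}$, and again to extract a further subsequence along which $U^{(n)}$ converges pointwise. Then $t^{(n)}=u^A(x^{(n)}(\theta),\theta)-U^{(n)}(\theta)$ converges pointwise by continuity of $u^A$, the limit is IC and IR by passing to the limit in each inequality, and the objective converges by dominated convergence. No envelope formula is needed at any step. If you want to repair your argument, the cleanest route is simply to carry $U$ (not just $U_0$) as a second monotone unknown and run Helly on it as well; this is exactly the paper's approach. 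A secondary point: since $\Theta^A$ is an arbitrary compact subset of $\R$, the classical Helly theorem does not apply verbatim; the paper invokes the version for monotone functions on linearly ordered sets due to Fuchino and Plewik, or equivalently you could extend each monotone $x^{(n)}$ to the convex hull of $\Theta^A$ before selecting.
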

The proof proceeds by showing the space of IC and IR mechanisms is sequentially compact in the product topology. The argument uses a generalized version of Helly's selection theorem from \citet{fuchino1999theorem}. 

\paragraph{Proofs of Approximation and Existence Lemmas.}\hspace{-2mm}We first prove the technical lemmas.
\begin{proof}[Proof of \Cref{lem:app}]
Without loss, let $\Theta \subseteq [0,1)$ and $0 \in \Theta$. The construction works as follows. Fix any $n \in \N$. Partition $[0,1)$ into intervals $\{[\frac{i-1}{n},\frac{i}{n})\}_{i=1, \dots, n}$.  Let 
\[I = \Big \{i:  \mu([\frac{i-1}{n},\frac{i}{n})) >0 \Big\}\,.\]
For any $i \in I$, let 
\[\theta^{(n)}_i = \min \Big \{ [\frac{i-1}{n},\frac{i}{n} ) \cap \Theta \Big \}\,.\]
(The minimum is attained since $\Theta$ is compact.) For notational convenience, we reindex $i$ so that it runs over from $1$ to $|I|$. Let 
\[\Theta^{(n)} = \{\theta^{(n)}_i\}_{i \in I} \,;\]
\[\mu^{(n)}(\theta^{(n)}_i) = \mu([\theta^{(n)}_i, \theta^{(n)}_{i+1}))\,.\]
We have $\Theta^{(n)} \subseteq \Theta$ finite and $\mu^{(n)} \in \Delta(\Theta^{(n)})$ full support. Note that \[\label{eq:dist} \mu(\{\theta \in \Theta: \theta \in [\theta^{(n)}_i, \theta^{(n)}_{i+1}) \text{ and } |\theta - \theta^{(n)}_i| > \frac{1}{n}  \}) = 0 \,.\tag{A.5}\]

We first show property $(ii)$ in the statement. Recall that for this lemma we assume $v$ is Lipschitz continuous on $\X \times \Theta$. Then, there exists some constant $K > 0$ such that for any $\theta, \theta' \in \Theta$,  
\[\label{eq:Lip}\max_{x \in \X} |v(x, \theta') - v(x, \theta)| 
\leq K |\theta' - \theta|\,. \tag{A.6}\]
Let $(x^{(n)}, t^{(n)})$ be any optimal solution to the full IC program $\eqref{eq:1d}^\dagger$ with $(\Theta^{(n)}, \mu^{(n)})$. Let $\bar{x}^{(n)}$ be the extension of $x^{(n)}$ to the right: 
\[\bar{x}^{(n)}(\theta) = x^{(n)}(\theta^{(n)}_i) \quad \text{for all } \theta \in [\theta^{(n)}_i, \theta^{(n)}_{i+1})\,.\]
Note that $\bar{x}^{(n)}$ is a monotonic function on $[0, 1)$. Define $\bar{t}^{(n)}$ in the same way. We claim $(\bar{x}^{(n)}, \bar{t}^{(n)})$, when restricted to $\Theta$, is a feasible solution to $\eqref{eq:1d}^\dagger$ with $(\Theta, \mu)$. To see this, offer the menu $\{(x^{(n)}_i, t^{(n)}_i)\}_{i \in I}$ to all types in $\Theta$. Type $\theta^{(n)}_{i+1}$ is indifferent between $(x^{(n)}_{i+1}, t^{(n)}_{i+1})$ and $(x^{(n)}_i, t^{(n)}_i)$. Type  $\theta^{(n)}_i$ finds $(x^{(n)}_i, t^{(n)}_i)$ optimal. Therefore, any type $\theta$ between $\theta^{(n)}_i$ and $\theta^{(n)}_{i+1}$ finds $(x^{(n)}_i, t^{(n)}_i)$ optimal since $u$ has strict increasing differences. By  construction,
\[V(\Theta^{(n)}, \mu^{(n)}) = \E^{\mu^{(n)}}[v(\bar{x}^{(n)}(\theta), \theta) + \bar{t}^{(n)}(\theta)]\,. \]
Since $(\bar{x}, \bar{t})$ is feasible to  $\eqref{eq:1d}^\dagger$ with $(\Theta, \mu)$, we have 
\[V(\Theta, \mu) \geq \E^{\mu}[v(\bar{x}^{(n)}(\theta), \theta) + \bar{t}^{(n)}(\theta)] \,.\]
Because $\bar{x}^{(n)}, \bar{t}^{(n)}$ are constant over each interval $[\theta^{(n)}_i, \theta^{(n)}_{i+1})$, by \eqref{eq:dist} and \eqref{eq:Lip}, we have 
\begin{align*} \label{eq:diff}
  &\Big | \E^{\mu}[v(\bar{x}^{(n)}(\theta), \theta) + \bar{t}^{(n)}(\theta)] - \E^{\mu^{(n)}}[v(\bar{x}^{(n)}(\theta), \theta) + \bar{t}^{(n)}(\theta)] \Big | \\
    &= \Big |\int v(\bar{x}^{(n)}(\theta), \theta) \d\mu -  \int v(\bar{x}^{(n)}(\theta), \theta) \d\mu^{(n)} \Big| \\
    &\leq \sum_{i\in I} \mu^{(n)}(\theta^{(n)}_i) \sup_{\theta \in  [\theta^{(n)}_i, \theta^{(n)}_i+\frac{1}{n}] \cap \Theta } \Big \{ \max_{x\in \X} \big | v(x, \theta) - v(x,\theta^{(n)}_{i})\big | \Big \}\\
    &\leq \sum_{i\in I} \mu^{(n)}(\theta^{(n)}_i) \frac{K}{n} = \frac{K}{n} \,. \tag{A.7}
\end{align*}
Then, it follows that
\[V(\Theta, \mu) \geq V(\Theta^{(n)}, \mu^{(n)})  - \frac{K}{n}\,.\]
Taking $\limsup$ on both sides gives property $(ii)$ in the statement. 

We now show property $(i)$ in the statement. It suffices to prove the weak convergence in $\Delta([0,1])$. Let $F$, $F^{(n)}$ be the CDFs of $\mu$, $\mu^{(n)}$. We have $ F^{(n)}(1) = F(1)  = 1$. Fix any $\theta \in [0, 1)$. Note that $\mu^{(n)} \preceq \mu$ in the stochastic dominance order, and hence 
\[\label{eq:fosd}
 F^{(n)}(\theta) \geq F(\theta) \,.\tag{A.8} \]
Let $i$ be such that $[\theta^{(n)}_{i}, \theta^{(n)}_{i+1}) \ni \theta$. Note that 
\[F^{(n)}(\theta) = \mu^{(n)}([0, \theta]) \leq \mu^{(n)}([0, \theta^{(n)}_{i+1}))= \mu([0, \theta^{(n)}_{i+1}))\,. \]
If $\theta + \frac{1}{n} \geq \theta^{(n)}_{i+1}$, then we have  
\[\mu([0, \theta^{(n)}_{i+1}))\leq F(\theta + \frac{1}{n}) \,.\]
Otherwise, since $\theta + \frac{1}{n} \geq \theta_{i}^{(n)} + \frac{1}{n}$, we have $\mu([\theta + \frac{1}{n}, \theta^{(n)}_{i+1})) = 0$. Thus,
\[\mu([0, \theta^{(n)}_{i+1})) = \mu([0, \theta + \frac{1}{n}))\leq F(\theta + \frac{1}{n})\,. \]
Hence, in either case, we have 
\[\label{eq:rev} F^{(n)}(\theta) \leq F(\theta + \frac{1}{n}) \,.\tag{A.9}\]
Using \eqref{eq:fosd}, \eqref{eq:rev}, and that $F$ is right-continuous, we have 
\[F(\theta) \leq  \lim_{n \rightarrow \infty }F^{(n)}(\theta) \leq \lim_{n \rightarrow \infty}  F(\theta + \frac{1}{n}) = F(\theta)\,.\]
Therefore, $F^{(n)}$ converges to $F$ pointwise, and hence $\mu^{(n)}\rightarrow_w \mu$. 
\end{proof}

\vspace{0.5cm}
\begin{proof}[Proof of \Cref{lem:exist}]
Recall $\M(\Theta)$ is the set of IC and IR mechanisms for the one-dimensional type space $\Theta$. We want to show the following program has a solution:
\[\sup_{(x, t) \in \M(\Theta)}\E [v(x(\theta), \theta) + t(\theta) ] \,.\]

We first show that it is without loss to restrict the range of $t$ to some interval $[-K, K]$ for $K$ large enough. By the IR constraints, we have $t(\theta) \leq \max_{x, \theta}|u(x, \theta)|$. By the IC constraints, for any $\theta, \theta'$, we have
\[|t(\theta) - t(\theta')| \leq 2\max_{x,\theta}|u(x,\theta)|\,.\]
Hence, for all $\theta$, 
\[\label{eq:tbound} t(\theta)\geq -3\max_{x,\theta}|u(x,\theta)| -2\max_{x,\theta}|v(x,\theta)| \,, \tag{A.10}\]
because if the above is violated at any type $\theta$, the principal gets strictly less than \[-\max_{x,\theta}|u(x,\theta)| - \max_{x,\theta}|v(x,\theta)|\]
but that can be easily obtained by offering a single option. Thus, the claim holds for $K = 3\max_{x,\theta}|u(x,\theta)|+2\max_{x,\theta}|v(x,\theta)|$.

Then, $\M(\Theta) \subseteq \X^\Theta \times[-K, K]^{\Theta}$ (with the product topology); we use the notation $\X^\Theta := \times_{\theta \in \Theta} \X$. By the dominated convergence theorem, the objective is sequentially continuous on $\M(\Theta)$. It is clear that $\M(\Theta)$ is nonempty. The existence result follows once we show $\M(\Theta)$ is sequentially compact. Fix any sequence $\{(x^{(n)},t^{(n)})\}_{n}$ in $\M(\Theta)$. Let 
\[U^{(n)}(\theta) = u(x^{(n)}(\theta), \theta) - t^{(n)}(\theta)\]
be the equilibrium payoff of type $\theta$. For any $\hat{\theta} < \theta$, by IC$[\theta \rightarrow \hat{\theta}]$, we have
\[U^{(n)}(\hat{\theta}) =  u(x^{(n)}(\hat{\theta}), \hat{\theta}) - t^{(n)}(\hat{\theta}) \leq u(x^{(n)}(\hat{\theta}), \theta) - t^{(n)}(\hat{\theta}) \leq U^{(n)}(\theta)\,. \]
Therefore, $U^{(n)} \in [-K, K]^{\Theta}$ is a monotone function (increase $K$ if necessary). Since $u$ has strict increasing differences, $x^{(n)} \in \X^\Theta$ is also a monotone function. Note that $\Theta, \X \subset  \R$ are linearly ordered and sequentially compact sets. By Helly's selection theorem for monotone functions on linearly ordered sets (\citealt{fuchino1999theorem}, Theorem 7), there exists a subsequence $\{x^{(n_k)}\}$ that converges pointwise. Applying the same theorem again on $\{U^{(n_k)}\}$, we obtain a subsubsequence $\{U^{(n_{k_l})}\}$ that converges pointwise. Therefore, 
\[t^{(n_{k_l})}(\theta) = u(x^{(n_{k_l})}(\theta), \theta) - U^{(n_{k_l})}(\theta)\]
also converges pointwise by continuity of $u$. Thus, there exists some $(x^*, t^*) \in \X^\Theta \times [-K, K]^{\Theta}$ such that 
\[(x^{(n_{k_l})}, t^{(n_{k_l})}) \rightarrow (x^*, t^*)\]
in the product topology. Being the pointwise limit of measurable real-valued functions, $x^*$ is measurable; so is $t^*$. Moreover, for any $\theta, \hat{\theta} \in \Theta$, 
\begin{align*}
u(x^*(\theta), \theta) - t^*(\theta) &= \lim_{l \rightarrow \infty}\big(u(x^{(n_{k_l})}(\theta), \theta) - t^{(n_{k_l})}(\theta) \big)\\    
&\geq \lim_{l \rightarrow \infty}\big(u(x^{(n_{k_l})}(\hat{\theta}), \theta) - t^{(n_{k_l})}(\hat{\theta}) \big) =  u(x^*(\hat{\theta}), \theta) - t^*(\hat{\theta})
\end{align*}
by continuity of $u$ and that $(x^{(n)}, t^{(n)})\in \M(\Theta)$ for all $n$. Therefore, $(x^*, t^*)$ satisfies all IC constraints. Similarly,  $(x^*, t^*)$ satisfies all IR constraints. So $(x^*, t^*) \in \M(\Theta)$, and hence $\M(\Theta)$ is sequentially compact. 
\end{proof}

\paragraph{Completion of the Proof of \Cref{thm:dbind}.}\hspace{-2mm}Recall that we want to show the optimal value of \eqref{eq:1d} equals $V(\Theta, \mu)$. We first show it for Lipschitz continuous $v$, and then extend it to all continuous $v$. Without loss, we assume $0 \in \Theta \subseteq [0, 1]$. Suppose for contradiction that there exist some $(\hat{x}, \hat{t})$ feasible for \eqref{eq:1d} and some $\epsilon > 0$ such that
\[\label{eq:upperbound} V(\Theta, \mu) + \epsilon \leq \E^\mu[v(\hat{x}(\theta), \theta) + \hat{t}(\theta)]\,. \tag{A.11}\]
Let $\bar{S} = 3\max_{x,\theta} |u(x,\theta)| + 3\max_{x,\theta}|v(x,\theta)|$. By Lusin's theorem (see e.g. \citealt{Aliprantis2006}, Theorem 12.8), there exists a compact set $\tilde{\Theta} \subseteq \Theta$ such that $\hat{x}, \hat{t}$ are continuous on $\tilde{\Theta}$ and $\alpha := \mu(\Theta \backslash \tilde{\Theta}) < \epsilon/(3\bar{S}) $. Since $\tilde{\Theta}$ is compact, $\underline{\tilde{\Theta}} := \min\{\tilde{\Theta}\}$ is attained. If $\underline{\tilde{\Theta}} > 0$, we augment $\tilde{\Theta}$ by adding $\theta = 0$. Since $\{0\}$ is a singleton disjoint from the compact set $\tilde{\Theta}$, we have $\hat{x}, \hat{t}$ continuous on the augmented set as well. Since $(\hat{x}, \hat{t})$ is IR, $\hat{t}(\theta) \leq \max_{x,\theta} |u(x,\theta)|$, and hence 
\[\label{eq:app} \E^\mu[v(\hat{x}(\theta), \theta) + \hat{t}(\theta)] \leq  (1-\alpha) \E^{\tilde{\mu}}[v(\hat{x}(\theta), \theta) + \hat{t}(\theta)] + \alpha \bar{S}\,, \tag{A.12}\]
where $\tilde{\mu}$ is the distribution of $\theta$ conditional on $\theta \in \tilde{\Theta}$. We pick an approximation sequence $\{(\Theta^{(n)}, \mu^{(n)})\}$ for $(\tilde{\Theta}, \tilde{\mu})$ according to \Cref{lem:app}. By \eqref{eq:diff}, for all $n$ large enough, we have
\[\label{eq:app2} \E^{\mu^{(n)}}[v(\bar{x}^{(n)}(\theta), \theta) + \bar{t}^{(n)}(\theta)] -   \frac{\epsilon}{3(1-\alpha)}\leq \E^{\tilde{\mu}}[v(\bar{x}^{(n)}(\theta), \theta) + \bar{t}^{(n)}(\theta)] \,, \tag{A.13}\]
where $(x^{(n)}, t^{(n)})$ is an optimal solution to the full IC problem $\eqref{eq:1d}^\dagger$ with $(\Theta^{(n)}, \mu^{(n)})$, and $(\bar{x}^{(n)}, \bar{t}^{(n)})$ is the extension of $(x^{(n)}, t^{(n)})$ to the right, as defined in the proof of \Cref{lem:app}. As in the proof of \Cref{lem:app}, $(\bar{x}^{(n)}, \bar{t}^{(n)})$ satisfies all IC and IR constraints for type space $\Theta$. As in the proof of \Cref{lem:exist},  \eqref{eq:tbound} then holds for $\bar{t}^{(n)}$. By feasibility, \eqref{eq:tbound}, and  \eqref{eq:app2}, 
\begin{align*}
    V(\Theta, \mu) &\geq \E^{\mu}[v(\bar{x}^{(n)}(\theta), \theta) + \bar{t}^{(n)}(\theta) ] \\
    &\geq (1-\alpha)\E^{\tilde{\mu}}[v(\bar{x}^{(n)}(\theta), \theta) + \bar{t}^{(n)}(\theta) ] -\alpha \bar{S}\\
     &\geq (1-\alpha)\E^{\mu^{(n)}}[v(\bar{x}^{(n)}(\theta), \theta) + \bar{t}^{(n)}(\theta) ]  -\frac{2}{3}\epsilon \\ 
     &= (1-\alpha)V(\Theta^{(n)}, \mu^{(n)})  -\frac{2}{3}\epsilon \\ 
     &\geq (1-\alpha)\E^{\mu^{(n)}}[v(\hat{x}( \theta), \theta) + \hat{t}(\theta) ]  -\frac{2}{3}\epsilon \,.
\end{align*}
In the last inequality, we have used that $(\hat{x}, \hat{t})$ is a downward IC and IR mechanism for $(\Theta^{(n)}, \mu^{(n)})$ and that \Cref{thm:dbind} holds for finite type spaces (see \Cref{subsec:dst}). Because $\hat{x}, \hat{t}$ are bounded and continuous on $\tilde{\Theta}$, and $v$ is continuous on the compact space $\X \times \Theta$, we have $v(\hat{x}( \theta), \theta) + \hat{t}(\theta)$ is bounded and continuous on $\tilde{\Theta}$. But, since $\mu^{(n)} \rightarrow_w \tilde{\mu}$ in $\Delta(\tilde{\Theta})$, taking limits on both sides of the above and using \eqref{eq:app}, we see that 
\begin{align*}
V(\Theta, \mu) &\geq (1-\alpha)\E^{\tilde{\mu}}[v(\hat{x}( \theta), \theta) + \hat{t}(\theta) ]  -\frac{2}{3}\epsilon    \\
&\geq \E^{\mu}[v(\hat{x}( \theta), \theta) + \hat{t}(\theta) ] - \alpha \bar{S} -\frac{2}{3} \epsilon\\
& > \E^{\mu}[v(\hat{x}( \theta), \theta) + \hat{t}(\theta) ] - \epsilon\,,
\end{align*}
which is a direct contradiction to \eqref{eq:upperbound}.

Now we let $v$ be any continuous function on $\X \times \Theta$. Since $\X \times \Theta$ is compact, as a consequence of the Stone–Weierstrass theorem (see e.g. \citealt{Aliprantis2006}, Theorem 9.13), the set of Lipschitz continuous real-valued functions on $\X \times \Theta$ is dense in the space of continuous functions on  $\X \times \Theta$ (with the sup norm). Therefore, there exists a sequence of Lipschitz continuous functions $\{v_{k}\}$ converging uniformly to $v$. Passing to a subsequence if necessary, we may assume that for all $k$, 
\[\sup_{x \in \X, \theta \in \Theta} |v_{k}(x, \theta) - v(x, \theta)| < \frac{1}{k}\,.\]
Using the above and the earlier result applied to $v_k$, we have for all $k$, 
\begin{align*}
\sup_{(x,t) \in \tilde{\M}(\Theta)} \E\big [v(x(\theta), \theta) + t(\theta) \big ] - \frac{1}{k}  &\leq \sup_{(x,t) \in \tilde{\M}(\Theta)}\E\big [v_{k}(x(\theta), \theta) + t(\theta) \big ] \\
&\leq \sup_{(x,t) \in \M(\Theta)}\E\big [v_{k}(x(\theta), \theta) + t(\theta) \big ] \leq \sup_{(x,t) \in \M(\Theta)}\E\big [v(x(\theta), \theta) + t(\theta) \big ] + \frac{1}{k}\,.   
\end{align*}
Taking $k \rightarrow \infty$ then gives the desired inequality. 

Invoking the existence result of \Cref{lem:exist}, we conclude the proof of \Cref{thm:dbind}.

\subsection{Proof of \texorpdfstring{\Cref{thm:main}}{}}

Recall that $\Theta_\varepsilon=\{(\theta^A,\theta^B): \theta^B = h(\theta^A; \varepsilon), \theta^A \in \Theta^A\}$ is the decomposed monotonic path given a realization $\varepsilon$. For any type space $\Theta$, recall that $\mathcal{M}(\Theta)$ is the set of IC and IR mechanisms. Note that the principal's optimal payoff can be bounded above as follows: 
\begin{align*}\label{eq:split}
\hspace{-10mm}    \sup_{(x,y,t) \in \mathcal{M}(\Theta)} \E\Big[v^A(x(\theta), \theta^A) + v^B(y(\theta), \theta^B)+t(\theta)\Big]  &\leq \E_\varepsilon\Big[ \sup_{(x,y,t) \in \mathcal{M}(\Theta)} \E\big [v^A(x(\theta), \theta^A) + v^B(y(\theta), \theta^B)+t(\theta) \mid \varepsilon \big ] \Big ] \\
     &\leq \E_\varepsilon\Big[ \sup_{(x,y,t) \in \mathcal{M}(\Theta_\varepsilon)} \E\big [v^A(x(\theta), \theta^A) + v^B(y(\theta), \theta^B) +t(\theta) \mid \varepsilon \big ] \Big ]  \,.
\end{align*}
Because $\varepsilon$ is independent of $\theta^A$, the inner expectation integrates with respect to the same marginal distribution of $\theta^A$ regardless of the realization of $\varepsilon$. 

By the proof in \Cref{sec:proof}, note that for all realizations of $\varepsilon$, 
\[ \sup_{(x,y,t) \in \mathcal{M}(\Theta_\varepsilon)} \E\big [v^A(x(\theta), \theta^A) + v^B(y(\theta), \theta^B) +t(\theta)  \mid \varepsilon \big ] \]
can be attained by a single mechanism in $\mathcal{M}(\Theta)$ that involves no costly screening and does not depend on $\varepsilon$. The first part of \Cref{thm:main} follows immediately.

Now, for the second part of \Cref{thm:main} (uniqueness part), note that if the instruments are strictly costly, then by \Cref{lem:dom} and \Cref{thm:dbind}, for all realizations of $\varepsilon$, all optimal solutions to the above problem satisfy $\P(y(\theta) = y_0 \mid \varepsilon) = 1$. Now, if a mechanism $(x, y, t)$ in $\mathcal{M}(\Theta)$ has $y(\theta) \neq y_0$ for a positive measure of $\theta$, then we have $\P(y(\theta) = y_0 \mid \varepsilon) < 1$ for a positive measure of $\varepsilon$. Hence, if the instruments are strictly costly, then $(x, y, t)$ is strictly dominated by any optimal mechanism involving no costly screening, and thus the second part of \Cref{thm:main} follows.

\subsection{Proof of \texorpdfstring{\Cref{prop:converse}}{}}
Without loss of generality, we may assume $i = 1$. Let $v^A = v^B = 0$. For convenience, we write $\theta^A$ as $\theta^0$.  Since $\theta^1$ is stochastically nonincreasing in $\theta^0$, we have that $\theta^1$ and $-\theta^0$ are positively upper orthant dependent (\citealt{muller2002comparison}, pp. 121-125), and hence 
\[\P(-\theta^0 > -q^0, \theta^1 > q^1) \geq \P(-\theta^0 > -q^0) \P(\theta^1 > q^1) \]
for all $q^0 \in \Theta^0, q^1 \in \Theta^1$. Since $\theta^0$ and $\theta^1$ are not independent, there exist $q^0, q^1$ such that 
\[\P(-\theta^0 > -q^0, \theta^1 > q^1) > \P(-\theta^0 > -q^0) \P(\theta^1 > q^1) \,.\]
That is, 
\[\P(\theta^0 < q^0, \theta^1 > q^1) > \P(\theta^0 < q^0) \P(\theta^1 > q^1) \,.\]
Clearly, for the above to hold, we must have $\P(\theta^0 < q^0) \in (0, 1)$ and $\P(\theta^1 > q^1) \in (0, 1)$. Let 
\[\mu^0 := \P(\theta^0 > q^0),\quad  \mu^1 := \P(\theta^1 > q^1)\,.\]
As the type distribution is absolutely continuous, we have $\mu^0 \in (0, 1)$ and we can write  
\[\P(\theta^0 > q^0, \theta^1 < q^1) = \mu^0 - \mu^1 + \P(\theta^0 < q^0, \theta^1 > q^1) > \mu^0 - \mu^1 + (1-\mu^0) \mu^1 = (1-\mu^1) \mu^0\,.\]

Define 
\[
f(\theta^0) = \begin{cases}
1 & \text{if $\theta^0 \leq q^0$} \\ 
\frac{1}{\mu^0} & \text{if $\theta^0 > q^0$}
\end{cases}\,, \qquad g(\theta^1) = \begin{cases}
-\frac{1}{\mu^0} & \text{if $\theta^1 \leq q^1$} \\ 
-\epsilon & \text{if $\theta^1 > q^1$}
\end{cases}\,,\]
where $\epsilon > 0$ will be determined shortly. Let $\tilde{f}$ be a continuous approximation of $f$ such that $\tilde{f}(\theta^0) = f(\theta^0)$ for all $\theta^0 \not\in (q^0 -\epsilon, q^0 + \epsilon)$. It is clear that we may select $\tilde{f}$ to be nondecreasing. Let $x_0 = \min \X$ and $\hat{x} = \max \X$. Since $|\X| > 1$, $\hat{x} \neq x_0$. Since $|\Y| > 1$, there exists some $\hat{y} \neq y_0 \in \Y$. Now let the agent's utility functions be: 
\[u^A(x, \theta^A) = \tilde{f}(\theta^0) \frac{x - x_0}{\hat{x} - x_0}, \qquad u^B(y, \theta^B) = g(\theta^1) \1_{y \neq y_0} \,. \]
Consider offering the following menu of three options: 
\[\big \{(\hat{x}, y_0, 1/\mu^0-\epsilon), (\hat{x}, \hat{y}, 1 - 2\epsilon), (x_0, y_0, 0) \big \}\,.\]
Let the agent choose among these, breaking ties in favor of the principal. This yields a payoff of  at least 
\[r(\epsilon) := (1 - 2\epsilon) \P(\theta^1 > q^1)  + ( 1/\mu^0 -\epsilon ) \P(\theta^0 \geq q^0 + \epsilon, \theta^1 \leq q^1)  \]
for the principal. Screening the productive component alone yields a payoff of at most
\[q(\epsilon) := \frac{1}{\mu^0} \P(q^0-\epsilon \leq \theta^0 \leq q^0+\epsilon ) + 1 \]
for the principal. Note that $r(\epsilon), q(\epsilon)$ are both right-continuous at $0$, and 
\[\lim_{\epsilon \downarrow 0} r(\epsilon)= \mu^1 + \frac{1}{\mu^0}\P(\theta^0 > q^0, \theta^1 < q^1) > \mu^1 + 1 - \mu^1 = 1 = \lim_{\epsilon \downarrow  0} q(\epsilon)\,.\]
Thus, there exists some $\epsilon^* > 0$ such that $r(\epsilon^*) > q(\epsilon^*)$. With this choice of $\epsilon^*$, the above construction then gives the utility functions such that the menu of three options strictly dominates any mechanism screening only the productive component. 

\subsection{Proof of \texorpdfstring{\Cref{prop:pricing}}{}}
This follows immediately from \Cref{thm:main}.

\subsection{Proof of \texorpdfstring{\Cref{prop:labor}}{}}
Let $t = -wx$ denote the expected transfer from the agent to the principal. Then we can write the agent's payoff as 
\[-C(\theta^A) x + \theta^B y - t\,,\]
and the principal's payoff as 
\[V(\theta^A) x  + t\,.\]
In this problem, there is also the constraint that $t=-wx$ must be $0$ if $x = 0$. Relax that constraint. Then this is a special case of the main model. The surplus function $(V(\theta^A) - C(\theta^A))x$ satisfies our surplus condition because $V(\theta^A) - C(\theta^A)$ is nondecreasing in $\theta^A$. Applying \Cref{thm:main} yields an optimal mechanism $(x^*, 0, t^*)$ that involves no costly screening. Note that for $(x^*, t^*)$ to be optimal for the productive component, the payment $t_0$ associated with the option $x = 0$ must be $0$ because \textit{(i)} if $t_0 > 0$ then the mechanism cannot be IR and \textit{(ii)} if $t_0 < 0$ then the principal can strictly improve upon the mechanism by increasing $t(\theta)$ uniformly by $|t_0|$ for all types $\theta$. Thus, $(x^*, 0, t^*)$ is implementable in the original problem and hence must be optimal, proving the result.  

\subsection{Proof of \texorpdfstring{\Cref{prop:pure}}{}}
The only missing detail in the proof sketch in \Cref{subsubsec:pure} is to show that the positive correlation between the MRS and the productive types suffices for our main result under multiplicative preferences. Specifically, consider the following special case of our main model (which is more general than what we needed for \Cref{subsubsec:pure} but will be useful for \Cref{app:quality}). The agent's preferences are: 
\[\theta^A u(x) + \theta^B \cdot c(y) - t \,,\]
where $u:[0, 1] \rightarrow \R$ is a continuous and strictly increasing function satisfying $u(0) = 0$, and $c: \Y \rightarrow \R_{\geq 0}^{N}$ is a bounded measurable function satisfying $c(y_0) = 0$ for some $y_0 \in \mathcal{Y}$. Note that here $\theta^A \in \R_{>0}$ and $\theta^B \in \R^N_{\leq 0}$. The principal's preferences are: 
\[-C(x) + t\,,\]
where $C$ is continuous and nondecreasing, satisfying $C(0) = 0$. That is, we assume no interdependent preferences here and that $y$ is only costly for the agent. 

We claim that if $r^B:= \frac{1}{\theta^A}\theta^B$ is stochastically nondecreasing in $\theta^A$, then there exists an optimal mechanism involving no costly screening. By \Cref{lem:decomp}, as in \Cref{subsec:decompose}, it suffices to show the case where $r^B = h(\theta^A)$ for some nondecreasing function $h: \Theta^A \rightarrow \R^N$. Thus, we may assume for all $i$, $r^i$ is deterministic conditional on $\theta^A$ and nondecreasing in $\theta^A$. Fix any $(x, y, t)$ that is IC and IR. We may assume $t \geq 0$, because the monopolist can simply replace all options with negative profits in the menu with $(0, y_0, 0)$ and weakly increase the total profit (since the monopolist's cost function does not depend on the buyer's type). Now we apply a reconstruction argument as follows. Consider the modification: $\tilde{t} = t, \tilde{y} = y_0$,  
\[\tilde{x}(\theta) = u^{-1}\big(u(x(\theta)) + \frac{1}{\theta^A} [\theta^B \cdot c(y(\theta))] \big) \,.\]
Because $u(\,\cdot\,)$ is continuous and strictly increasing with $u(0)=0$, $u^{-1}$ is defined on $[0, u(1)]$. Moreover, because $(x, y, t)$ is IR and $t \geq 0$, we have $0 \leq u(x(\theta)) + \frac{1}{\theta^A} [\theta^B \cdot c(y(\theta)) ]\leq u(x(\theta))$ for all $\theta$. So the modification is well-defined and $0 \leq \tilde{x} \leq x$ pointwise. In other words, the modified mechanism decreases the productive allocation to substitute the costly screening so that all types have the same utilities as before, \textit{assuming} truthful reporting. 

Because $C(\,\cdot\,)$ is nondecreasing, this modification increases the objective, assuming truthful reporting. It is IR by construction. Moreover, it is downward IC: for any $\hat{\theta}^A < \theta^A$, 
\begin{align*}
\theta^A u(\tilde{x}(\theta)) - \tilde{t}(\theta)  
&=  \theta^A u(x(\theta)) + \theta^B \cdot c(y(\theta)) - t(\theta)\\
&\geq  \theta^A u(x(\hat{\theta})) + \theta^B \cdot c(y(\hat{\theta})) - t(\hat{\theta})\\
&= \theta^A \big( u(x(\hat{\theta})) + r^B \cdot c(y(\hat{\theta})) \big)  - t(\hat{\theta}) \\
&\geq \theta^A \big( u(x(\hat{\theta})) + \hat{r}^B \cdot c(y(\hat{\theta})) \big)  - t(\hat{\theta})  = \theta^A u(\tilde{x}(\hat{\theta})) - \tilde{t}(\hat{\theta})\,.
\end{align*}
The first inequality holds because $(x, y, t)$ is IC. The second inequality holds because $\hat{r}^B \leq r^B$ and $c \geq 0$. Invoking \Cref{thm:dbind} concludes the proof of the claim.

\setlength\bibsep{8pt}
\bibliographystyle{ecta} 
\bibliography{references}

\newpage
\section{Online Appendix}
\label{app:b}

\subsection{Absence of Surplus Condition} \label{app:example}

The following example illustrates why the surplus condition is needed for \Cref{thm:main}: 
\begin{ex}[Absence of surplus condition]\label{ex:ubind}
We consider the same setting as in \Cref{ex:poscor} (positive correlation) in \Cref{subsec:intuition} except that we increase the cost to serve the high type. In particular, suppose that it now costs $4$ to serve the high-risk type---this would violate our surplus condition since the efficient allocation is to only trade with the low-risk type. Equivalently, the agent's and the principal's utility functions on the productive component are now: 
\[u^A(x, \theta^A) = (\theta^A + 2) x\,, \quad  v^A(x, \theta^A) = - 4 \theta^A x\,.\]
As in \Cref{ex:poscor}, the costly instrument $y$ is waiting in line, for which the low-risk type has cost $1$ and the high-risk type has cost $0$. Equivalently, the agent's and the principal's utility functions on the costly component are
\[\qquad \qquad  \qquad u^B(y,\theta^B) = (\theta^B-1) y, \qquad v^B(y, \theta^B) = 0 \,, \quad \text{where $\theta^B = \theta^A$ }\,.\]
As in \Cref{ex:poscor}, the agent has positively correlated preferences because the high-risk type has a higher utility for the insurance and lower disutility for the costly action. 

However, the costly instrument can still be useful in this example. One may verify that if the principal only screens the productive component, then the best the principal can do is not to trade and get a payoff of $0$. Now, suppose the principal can use costly screening. Consider a menu of two options $\big\{(1, 0, 2),\,\, (\frac{1}{2}, \frac{1}{2}, \frac{1}{2})\big\}$, which can be interpreted as a full insurance plan with a price of $2$, and a low-coverage plan that has a price $\frac{1}{2}$ but requires some amount of waiting (this is the same menu as in \Cref{ex:poscor}).  The low-risk type, finding waiting too costly, purchases the full insurance plan. The high-risk type, finding the low-coverage plan cheap, purchases the low-coverage plan. With this menu, the principal gets a profit of $\frac{1}{2} \times 2 + \frac{1}{2}  \times (\frac{1}{2} - \frac{1}{2} \times 4) = \frac{1}{4} > 0$, strictly better off than not trading. 
\qed
\end{ex}

Compared to \Cref{ex:poscor}, the difference is that the productive component now violates the surplus condition. In the absence of the surplus condition, we can no longer ignore the upward deviation. Indeed, suppose we impose only the downward incentive constraint. Then the principal optimally sets a price of $2$ for the full insurance plan targeted at the low-risk type, and pays the high-risk type $1$ to stay out of the market. This yields a profit $\frac{1}{2} \times 2 + \frac{1}{2} \times (-1) = \frac{1}{2}$. But, of course, this is not incentive compatible: the low-risk type wants to take the payment of $1$ as well. Therefore, consistent with the intuition provided in \Cref{subsec:intuition}, in the absence of the surplus condition, the principal cannot simply ignore the possible upward deviations when screening the productive component, and hence may make use of costly screening even if the agent has positively correlated preferences.

\subsection{Nested Bundling Example}\label{app:nestedexample}

\paragraph{A parametric example for the two-item case.}\hspace{-2mm}Suppose $\Theta = [1, 1.5]$ with a uniform distribution. Consider the following valuations: 
\[v(\{1\}, \theta) = \alpha \theta^{\kappa_1}, \qquad v(\{2\}, \theta) = \theta^{\kappa_2} -1, \qquad v(\{1, 2\}, \theta) = \theta^2 \,.\]
Fix $\alpha = 0.75$ and $\kappa_2 = 2.5$. There are decreasing differences in the values of $\{1, 2\}$ versus those of $\{2\}$. Now comparing $\{1, 2\}$ and $\{1\}$, we note that there are decreasing differences in the values if $\kappa_1 \geq 2.7$ and there are increasing differences in the values if $\kappa_1 \leq 2.3$. Here, for example:
\begin{itemize}
    \item When $\kappa_1 = 2.7$, pure bundling is optimal (with a price $\approx 1$ for the bundle).
    \item When $\kappa_1 = 1.8$, selling the menu $\{\{1\}, \{1, 2\}\}$ is optimal (with a price $\approx 0.75$ for item $1$ and a price $\approx 1.05$ for the bundle).
\end{itemize}
The case of $\kappa_1 = 1.8$ follows immediately from \Cref{cor:nested} but there is no known bundling result implying this.

\subsection{Bundling and Quality Discrimination}\label{app:quality}
We show an application to a multiple-good monopoly problem allowing for both probabilistic bundling and quality discrimination. 

We follow the same notation as in \Cref{subsubsec:pure} and generalize it as follows. A monopolist sells $G$ goods. For each bundle $b$, a consumer has value $v^b$ for getting the highest quality version of the bundle with probability one.  We assume $v^{b}\leq v^{\overline{b}}$ and $v^\emptyset = 0$. The monopolist can use probabilistic bundling, captured by a bundling allocation rule $v \mapsto \alpha(v) \in \Delta(\mathcal{B})$. In addition, the monopolist can adjust the quality of each bundle, captured by a quality allocation rule $v \mapsto q(v) \in [0, 1]^{\mathcal{B}}$. A type-$v$ consumer's payoff is given by \[\sum_{b} \alpha^b q^b v^b  - t\,. \]
The monopolist incurs a cost to improve the quality of a bundle, with a payoff given by
\[-\sum_{b} \alpha^b C(q^b) + t\,,\]
where $C(\,\cdot\,)$ is a continuous, nondecreasing, and convex function on $[0,1]$ with $C(0)=0$. 

Let  $\tau = \big(\frac{v^b}{v^{\overline{b}}}\big)_{b \subset \overline{b}}$ be the profile of values for each bundle relative to the grand bundle. 

\begin{prop} \label{prop:bundle}
If $\tau$ is stochastically nondecreasing in $v^{\overline{b}}$, then an optimal mechanism exists and can be implemented by a menu of prices for different qualities of the grand bundle. 
\end{prop}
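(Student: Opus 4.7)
The plan is to embed the problem into the framework of \Cref{sec:proof} by combining the Strassen-type decomposition with a shift argument that collapses any mechanism into a pure grand bundle mechanism, so that \Cref{prop:dbind} closes the loop. Conceptually, the productive dimension corresponds to the quality of the grand bundle, and "selling a smaller bundle instead" is the costly instrument: the consumer values any sub-bundle less ($v^b \leq v^*$) while, by convexity of $C$, offering smaller bundles does not save the seller any cost relative to pooling them into a single grand bundle of the same effective quality.

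First, I would apply \Cref{lem:decomp} to the stochastic monotonicity of $\tau$ in $v^*$, obtaining a coupling $\tau \eqid h(v^*; \varepsilon)$ with $h$ componentwise nondecreasing in $v^*$ and $\varepsilon$ independent of $v^*$. Conditioning on $\varepsilon$ reduces the problem to a one-dimensional screening problem in $v^*$ along which $\tau$ is a deterministic nondecreasing function, and provides an upper bound on the original supremum exactly as in \eqref{eq:split}.

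Next, fix a realization of $\varepsilon$ and take any IC-IR mechanism $(\alpha, q, t)$ in the conditional problem. I would define the shifted pure grand bundle mechanism
\[\tilde{x}(v^*) = \sum_b \alpha^b(v^*)\, q^b(v^*)\, \tau^b(v^*)\,, \qquad \tilde{t}(v^*) = t(v^*)\,,\]
with $\tilde{x}(v^*) \in [0,1]$. Truthful reporting leaves each type's utility unchanged, preserving IR. The seller's cost weakly falls, because Jensen's inequality on the convex $C$ combined with $\tau^b \leq 1$ gives $C(\tilde{x}) \leq \sum_b \alpha^b C(q^b \tau^b) \leq \sum_b \alpha^b C(q^b)$. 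The crucial check is downward IC: for any $\hat{v}^* < v^*$, the new IC slack equals the original IC slack plus $v^* \sum_b \alpha^b(\hat{v}^*)\, q^b(\hat{v}^*)\, [\tau^b(v^*) - \tau^b(\hat{v}^*)]$, both of which are nonnegative by the monotonicity of $\tau$ in $v^*$, directly paralleling \eqref{eq:downIC}.

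Finally, the shifted problem is a standard one-dimensional screening problem on $\X = [0,1]$ with $u^A(x,v^*) = v^* x$ and $s^A(x,v^*) = v^* x - C(x)$, both of which have strict increasing differences, so all parts of \Cref{asm:prod} hold. \Cref{prop:dbind} then produces a pure grand bundle mechanism that solves the downward-IC relaxation \emph{and} satisfies all upward IC constraints. Because this pure grand bundle program is free of $\tau$ and $\varepsilon$, the same fully-IC optimizer works uniformly across $\varepsilon$ and is feasible in the original multi-good problem, which sandwiches the two values and shows the original optimum is implementable as a menu of qualities of the grand bundle; existence is then furnished by \Cref{lem:exist}. I expect the main obstacle to be the downward-IC verification in the shift step, since it is precisely where the stochastic monotonicity of $\tau$ gets converted into the sufficiency of downward IC on the reduced one-dimensional problem.
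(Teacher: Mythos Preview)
Your proposal is correct and follows essentially the same route as the paper: a Strassen decomposition on the ratio vector, a shift to a pure grand-bundle allocation that preserves downward IC, and then \Cref{prop:dbind} to close the argument. The paper packages the middle step by first relaxing the objective via Jensen, reducing to $q^b\equiv 1$, and then invoking \Cref{prop:linear}; you instead apply Jensen inside the shift and handle the $q^b$'s directly, which is a minor reorganization of the same idea.
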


\begin{proof}[Proof of \Cref{prop:bundle}]
By convexity of $C(\,\cdot\,)$ and Jensen's inequality, we have
\begin{align*}
  \sum_{b} \alpha^b(v) C(q^b(v)) \geq  C \Big ( \sum_{b}  \alpha^b(v)  q^b(v) \Big )   \,.
\end{align*}
Therefore, it is an upper bound on the monopolist's revenue to maximize the objective \[\label{eq:aux} \E\Big[-C \Big ( \sum_{b}  \alpha^b(v)  q^b(v) \Big )  + t(v) \Big] \,.\tag{B.1}\]
For this auxiliary problem, let us also relax the constraint $\sum_b \alpha^b = 1$ to $\sum_b \alpha^b \leq 1$. Then,  because $\alpha, q$ enter both the consumer's utility and the objective in the same way, it is without loss of generality to let $q^b = 1$ for all $b$.

We now reformulate this problem into our main model. Let $\theta^A = v^{\overline{b}}$ be the value of the grand bundle. For any proper bundle $b$, let
\[\theta^b = v^b - v^{\overline{b}}\]
be the difference in values for bundle $b$ and the grand bundle $b^*$.  In words, $\theta^b$ is the negative value for getting bundle $b$ instead of $b^*$. Let $N = 2^G - 1$, and let $\theta^B = (\theta^1, \dots, \theta^{N})$ be the profile of the differences. 

We use the same mapping as in \Cref{subsubsec:pure}. We use $x: \Theta \rightarrow [0,1]$ to denote the \textit{initial} allocation of the grand bundle, and $y: \Theta \rightarrow [0,1]^{N}$ to denote the costly instruments as follows. An assignment $y^b \in [0, 1]$ represents assigning bundle $b$ with probability $y^b$ while \textit{decreasing} the probability of the grand bundle $b^*$ also by $y^b$. The consumer's payoff can be rewritten as 
\[\theta^A x +  \theta^B \cdot y - t \,.\]
For any substochastic allocation $\alpha$ (i.e. $\sum_b \alpha^b \leq 1$), we can replicate it by setting 
\[x = \sum_{b} \alpha^b\,,  \qquad y^b = \alpha^b \text{ for all } b \neq b^* \,.\]
Therefore, the auxiliary problem \eqref{eq:aux} can be further relaxed to
\[\label{eq:costly} \sup_{(x, y, t) \in \M(\Theta) } \E[-C(x(\theta)) + t(\theta)]\,. \tag{B.2}\]
For any $b = 1, \dots, N$, we have $\frac{v^b}{v^{\overline{b}}} = \frac{\theta^A + \theta^b}{\theta^A} = 1 + \frac{\theta^b}{\theta^A}$. Since $\tau$ is stochastically nondecreasing in $v^{\overline{b}}$, we have that $r^B := \frac{1}{\theta^A} \theta^B$ is stochastically nondecreasing in $\theta^A$. By the proof of \Cref{prop:pure}, we know that \eqref{eq:costly} admits an optimal mechanism involving no costly screening. Let $(x^*, 0, t^*)$ be the optimal solution to \eqref{eq:costly} that involves no costly screening.

We construct an allocation rule in the original problem as follows:
\[\text{$\alpha^{b^*} = 1$, $\alpha^b = 0$ for all $b \neq b^*$; \quad $q^{b^*} = x^*$,  $q^{b} = 0$ for all $b \neq b^*$.}\]
Because probabilities and qualities enter the consumer's utility in the same way and $(x^*, 0, t^*)$ is IC and IR,  $(\alpha, q, t^*)$ is also IC and IR. The revenue of the monopolist under $(\alpha, q, t^*)$ is \[\E[-C(q^{b^*}(\theta)) + t^*(\theta)] = \E[-C(x^*(\theta)) + t^*(\theta)]\,,\]
the optimal value of  \eqref{eq:costly}. Hence, $(\alpha, q, t^*)$ is optimal for the monopolist in the original problem; moreover, $(\alpha, q, t^*)$ screens using only the qualities of the grand bundle. 
\end{proof}

\subsection{Competitive Labor Market Screening} \label{app:comp}

Our main model assumes monopolistic screening. It delivers a prediction different from the usual perception of costly screening in competitive labor markets. In this appendix, we formulate a stylized competitive screening model consisting of two screening devices and show how competition can reverse the use of costly instruments.

There are two types of workers $\theta_H > \theta_L \geq 0$ in a perfectly competitive labor market.\footnote{This part of the setup is standard; see e.g. \citetapp{spence1978product} and \citetapp{stantcheva2014optimal}. } A type-$\theta_i$ worker incurs a cost $\psi_i(x)$ for producing $x \in [0,1]$ units of work where $\psi_i$ is a strictly increasing, continuously differentiable, and strictly convex function on $[0, 1]$ with $\psi_i(0) = 0$. A firm gets a payoff $\theta_i x$ from $x$ units of work by a type-$\theta_i$ worker. 

Suppose the marginal cost is lower for the higher type: $\psi'_H(x) < \psi'_L(x)$ for all $x \in [0,1]$. The efficient amount of production for type $\theta_i$ is $x^{e}_i := (\psi_{i}')^{-1}(\theta_i)$, assumed to be in the interior of $[0, 1]$. Suppose that 
\[\label{eq:adverse} \theta_L x^e_L - \psi_L(x^e_L)  < \theta_H x^e_H - \psi_L(x^e_H) \,\]
so the low type wants to imitate the high type when given the menu of the efficient allocations with competitive prices. Without this assumption, there is no adverse selection problem. Suppose also that there exists some $x \geq x^e_L$ such that $\theta_L x^e_L - \psi_L(x^e_L)  \geq \theta_H x - \psi_L(x)$ so it is possible to separate the types using only the work allocations. 

There is one costly instrument. For a level $y \in [0,1]$ of the costly activity, a type-$\theta_i$ worker incurs a cost $c_i(y)$ where $c_i$ is a strictly increasing, continuously differentiable function on $[0, 1]$ with $c_i(0) = 0$. Suppose $\label{eq:effective} c'_L(0) > \psi'_L(1)$ and $c'_H(0) = 0$. This says that a small amount of $y$ costs nothing for the high type but a lot for the low type. 

The firms commit to a set of offers. Each offer specifies an amount of work $x$, a level of costly activity $y$, and a wage $w$.  The literature has not reached a consensus on the choice of solution concept for competitive screening models. We say a set of offers $\{(x, y, w)\}$ is a \textit{\textbf{separating set}} if \textit{(i)} the types separate and \textit{(ii)} the firms earn zero payoff on each offer. A set of offers is a \textit{\textbf{Pareto-optimal separating set}} if it is (constrained) Pareto-optimal among all separating sets. This solution concept is weaker than the Pareto-dominant separating set, which is known to be equivalent to the reactive equilibrium of \citetapp{riley1979informational} in settings with one screening device (\citealtapp{engers1987market}).

This competitive screening model is analogous to the labor market application in \Cref{subsec:labor}. However, costly screening now emerges in equilibrium: 
\begin{prop}\label{prop:comp}
A Pareto-optimal separating set exists and any Pareto-optimal separating set involves costly screening. 
\end{prop}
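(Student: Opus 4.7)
The proof has two pieces: existence of a Pareto-optimal separating set by a compactness argument, and showing any separating set with $y_L=y_H=0$ is strictly Pareto-dominated via a local perturbation that exploits the asymmetry $c'_H(0)=0<c'_L(0)$.

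\textbf{Existence.} I would parameterize a separating set by $(x_L,y_L,x_H,y_H)\in[0,1]^4$, observing that zero profit pins down the wages to $w_i=\theta_i x_i$. The subset cut out by $\IC_L$, $\IC_H$, and individual rationality is closed in the compact cube $[0,1]^4$, hence compact, and is nonempty by the assumption that separation is possible with work alone. The map to utility pairs $(U_L,U_H) = (\theta_L x_L-\psi_L(x_L)-c_L(y_L),\,\theta_H x_H-\psi_H(x_H)-c_H(y_H))$ is continuous, so its image in $\R^2$ is compact; maximizing $U_L+U_H$ on this image returns a Pareto-optimal separating set.

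\textbf{Costly screening is necessary.} Suppose, for contradiction, that a Pareto-optimal separating set $S$ has $y_L=y_H=0$. Let $g(x):=\theta_H x-\psi_L(x)$ and $\bar{x}:=(\psi'_L)^{-1}(\theta_H)$, so that $g$ is strictly concave with unique maximizer $\bar{x}$, and the single-crossing $\psi'_L>\psi'_H$ gives $\bar{x}<x^e_H$. I would first argue $\IC_L$ binds at $S$: otherwise one could move $x_H$ monotonically toward $x^e_H$ holding $y=0$, strictly raising $U_H$ until either $\IC_L$ binds or $x_H=x^e_H$; the latter is infeasible because $U_L\le\theta_L x^e_L-\psi_L(x^e_L)<g(x^e_H)$ by the adverse-selection assumption, so some $y=0$ modification strictly Pareto-dominates $S$. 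Moreover $x_H\ne\bar{x}$, since strict concavity gives $g(\bar{x})>g(x^e_H)>U_L$, contradicting a binding $\IC_L$ at $\bar{x}$.

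The key perturbation replaces the high-type offer by $(x_H+\eta,\,\delta,\,\theta_H(x_H+\eta))$ for small $\delta>0$, with $\eta$ chosen so that $\IC_L$ continues to bind. Differentiating the binding identity at $\delta=0$ yields $\eta=c'_L(0)\,\delta/(\theta_H-\psi'_L(x_H))$, well defined because $x_H\ne\bar{x}$. The induced change in $U_H$ is, to first order,
\[\Delta U_H \approx (\theta_H-\psi'_H(x_H))\,\eta - c'_H(0)\,\delta \;=\; c'_L(0)\,\delta\cdot\frac{\theta_H-\psi'_H(x_H)}{\theta_H-\psi'_L(x_H)},\]
where the second equality uses $c'_H(0)=0$. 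The strict single-crossing $\psi'_L>\psi'_H$ forces numerator and denominator to share the same sign (both positive if $x_H<\bar{x}$, both negative if $x_H>\bar{x}$), so $\Delta U_H>0$. Meanwhile $U_L$ is unchanged by construction, and $\IC_H$ is preserved because $U_H$ strictly rose while the low-type offer was untouched. So $S$ is strictly Pareto-dominated by a separating set with $y_H>0$, contradicting Pareto-optimality.

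\textbf{Main obstacle.} The delicate step is the sign computation in the perturbation, which has to work uniformly regardless of whether $\IC_L$ binds at the ``left root'' $x_H<\bar{x}$ or the ``right root'' $x_H>\bar{x}$. What saves the argument is precisely the strict single-crossing $\psi'_L>\psi'_H$ — the same structural feature that makes the adverse-selection problem nontrivial — since it forces numerator and denominator of $\Delta U_H/\delta$ to have matching signs. The only other subtlety is ruling out the degenerate case $x_H=\bar{x}$ (where the implicit relation between $\eta$ and $\delta$ blows up), which is dispatched by combining strict concavity of $g$ with the adverse-selection inequality.
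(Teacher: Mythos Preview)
Your approach is genuinely different from the paper's and in some ways cleaner. For existence, the paper explicitly constructs a Pareto-optimal point by fixing $x_L=x^e_L$ and maximizing $U_H$ subject to $\IC_L$; you instead argue by compactness and maximize $U_L+U_H$. For necessity, the paper first pins down $x_L=x^e_L$ and proves $x_H>x^e_H$, then uses the \emph{specific} perturbation $(x_H-\epsilon,\epsilon)$ and the assumption $c'_L(0)>\psi'_L(1)$ to show the low type is deterred. Your implicit-function perturbation keeps $\IC_L$ binding by construction, so you never need $c'_L(0)>\psi'_L(1)$ --- only $c'_L(0)>0$ and $c'_H(0)=0$ --- and you never need to locate $x_L$.

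There is, however, a real gap in your sign computation. You claim that single-crossing $\psi'_L>\psi'_H$ forces numerator and denominator of $(\theta_H-\psi'_H(x_H))/(\theta_H-\psi'_L(x_H))$ to share the same sign, asserting ``both negative if $x_H>\bar{x}$.'' Single-crossing gives only $\theta_H-\psi'_H(x_H)>\theta_H-\psi'_L(x_H)$; when the latter is negative the former can still be positive --- precisely when $\bar{x}<x_H<x^e_H$. What actually rules this out is the same concavity argument you used for $x_H\neq\bar{x}$: since $g$ is strictly decreasing on $[\bar{x},x^e_H]$ and $g(x^e_H)>\theta_L x^e_L-\psi_L(x^e_L)\geq U_L$, the binding $\IC_L$ forces $x_H\notin[\bar{x},x^e_H]$, hence $x_H>\bar{x}$ implies $x_H>x^e_H$ and then $\theta_H-\psi'_H(x_H)<0$. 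You should state this extension explicitly; as written the justification is wrong.

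Two smaller points. First, your existence argument takes for granted that the feasible set of separating allocations is closed; the separation requirement $(x_L,y_L,\theta_Lx_L)\neq(x_H,y_H,\theta_Hx_H)$ makes this slightly delicate, though the degenerate boundary has $U_L+U_H\leq 0$ and is easily excluded. Second, you assert nonemptiness ``by the assumption that separation is possible with work alone,'' but the paper's assumption only gives $\IC_L$ directly; $\IC_H$ for the candidate $(x^e_L,x_H)$ still needs the single-crossing step the paper carries out.
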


\begin{proof}[Proof of \Cref{prop:comp}]
We first prove the second part. Suppose for contradiction that there exists a Pareto-optimal separating set $\{(x, y, w)\}$ that does not involve costly screening ($y = 0$). By the definition of a separating set, $x_H, x_L$ must differ. By the single-crossing property of $\psi$, we then have $x_H > x_L$. Note that $x_H$ cannot be $x_H^e$ because if so $\IC[\theta_L \rightarrow \theta_H]$ will be violated: 
\[\theta_L x_L - \psi_L(x_L)  \leq \theta_L x^e_L - \psi_L(x^e_L)  < \theta_H x^e_H - \psi_L(x^e_H)\,,\]
where the first inequality holds by definition of $x^e_L$ and the second inequality holds by assumption. Therefore, $\IC[\theta_L \rightarrow \theta_H]$ must be binding. To see this, note that if the upward IC constraint is not binding, then one can move $x_H$ by small enough $\delta$ toward $x^e_H$ without breaking the upward IC constraint. Since the surplus function $\theta_H x -\psi_H(x)$ is strictly concave, the modification increases the payoff of the high type and hence also preserves the downward IC constraint. But this means that the original set of offers is dominated by a separating set and hence impossible. 

Since $x_H > x_L$ and the upward IC constraint is binding, the downward IC constraint must be slack by the single-crossing property of $\psi$. This implies that $x_L = x^e_L$ because otherwise moving $x_L$ slightly toward $x^e_L$ gives a contradiction by the same argument as above. We claim that $x_H > x^e_H$. To see this, let 
\[f(x) = (\theta_H x - \psi_L(x)) - (\theta_L x^e_L - \psi_L(x^e_L)) \,. \]
Note that it is concave on $[x^e_L, x^e_H]$. Moreover, $f(x^e_L) = (\theta_H - \theta_L)x^e_L > 0$, and  $f(x^e_H) = (\theta_H x^e_H - \psi_L(x^e_H)) - (\theta_L x^e_L - \psi_L(x^e_L)) > 0$ by assumption. Thus, $f(x) > 0$ for all $x \in [x^e_L, x^e_H]$ and hence $x_H$ cannot be in that region. Therefore, $x_H > x^e_H$. 

Now consider the menu $\{(x_L, 0, \theta_L x_L), (x_H-\epsilon, \epsilon ,\theta_H (x_H - \epsilon))\}$ for $\epsilon > 0$. We claim that for $\epsilon$ small enough, the offer $(x_H-\epsilon, \epsilon ,\theta_H (x_H - \epsilon))$ increases the payoff of the high type. Let 
\[u_H(\epsilon) = \theta_H (x_H - \epsilon) -\psi_H(x_H - \epsilon) - c_H(\epsilon)\,.\]
It is a continuously differentiable function of $\epsilon$. The right derivative of this function at $0$ is strictly positive because 
\[\partial_{+} u_H(0) =  - (\theta_H  - \partial_{-}\psi_H(x_H)) - \partial_{+} c_H(0) =  - (\theta_H  - \partial_{-}\psi_H(x_H)) > 0\,,\]
where the second equality holds by assumption, and the last inequality holds by strict concavity of $\psi_H$ and that $x_H > x^e_H$. Therefore, there exists some $\epsilon > 0$ such that $u'_H(s) > 0$ for all $s \in [0, \epsilon]$; the claim follows immediately. 

We also claim that for $\epsilon > 0$ small enough, the modification still deters the low type from imitating the high type. To see this, let  
\[\hat{u}_L(\epsilon) = \theta_H (x_H - \epsilon) -\psi_L(x_H - \epsilon) - c_L(\epsilon) \,.\]
It is a continuously differentiable function of $\epsilon$. The right derivative of this function at $0$ is strictly negative because 
\[\partial_+ \hat{u}_L(0)= -\theta_H + \partial_{-} \psi_L(x_H) - \partial_+ c_L(0) \leq \partial_{-} \psi_L(1) - \partial_+ c_L(0) < 0 \,,\]
where the first inequality uses convexity of $\psi_L$ and the second inequality holds by assumption. Therefore, there exists some $\epsilon > 0$ such that $\hat{u}'_L(s) < 0$ for all $s \in [0, \epsilon]$; the claim follows immediately.  

Hence, for $\epsilon > 0$ sufficiently small, the proposed menu is a separating set that Pareto-improves on the original one. Contradiction. 

For the first part of the statement, consider the following optimization problem:
\[\max_{(x,y)\in[0,1]^2} \theta_H x -\psi_H(x) - c_H(y) \]
\[\text{subject to} \quad \theta_L x^e_L -\psi_L(x^e_L)  \geq \theta_H x -\psi_L(x) - c_L(y)\,. \]
An optimizer $(x^*, y^*)$ exists by standard compactness arguments. Moreover, $y^* \neq 0$ because otherwise it can be strictly improved by the argument above. 

We claim that $\{(x^e_L, 0, \theta_L x^e_L), (x^*, y^*, \theta_H x^*)\}$ is a separating set. The low type chooses the first offer by construction. To see that the high type chooses the second offer, recall that by assumption there exists some $x \geq x^e_L$ such that $\theta_L x^e_L - \psi_L(x^e_L)  \geq \theta_H x - \psi_L(x)$. Since this inequality is violated at $x^e_L$, by continuity, there exists some $x_H > x^e_L$ such that $\theta_L x^e_L - \psi_L(x^e_L)  = \theta_H x_H - \psi_L(x_H)$. Then, by the single-crossing property of $\psi$, $\theta_H x_H - \psi_H(x_H) > \theta_L x^e_L - \psi_H(x^e_L)$. Thus,
\[ \theta_H x^* -\psi_H(x^*) - c_H(y^*) \geq  \theta_H x_H - \psi_H(x_H) >  \theta_L x^e_L - \psi_H(x^e_L)\,, \]
where the first inequality uses that $(x_H, 0)$ is a feasible solution. So the high type chooses the second offer.   

Observe that $\{(x^e_L, 0, \theta_L x^e_L), (x^*, y^*, \theta_H x^*)\}$ must be Pareto-optimal among all separating sets. Suppose for contradiction that there is a separating set that Pareto-dominates it. Then the separating set must provide strictly higher payoff for the high type and maintain the same payoff for the low type because the low type already gets the maximal payoff. But that is impossible subject to $\IC[\theta_L \rightarrow \theta_H]$ by the construction of $(x^*, y^*)$. 
\end{proof}

\subsection{Additional Proofs}  \label{app:add}

\subsubsection{Measurable Monotone Coupling}
In this appendix, we provide the proof of \Cref{lem:decomp}, building on \citetapp{kamae1978stochastic}. We start with a technical lemma. 

\begin{lemma}\label{lem:eqv}
Suppose $X$, $Y$ are two $\Theta$-valued random variables where $\Theta$ is a compact subset of $\R^N$. Let $\preceq_{st}$ and $\preceq'_{st}$ denote the stochastic dominance partial orders on $\Delta(\R^N)$ and $\Delta(\Theta)$ (i.e. $X \preceq'_{st} Y$ if $\E[f(X)]\leq \E[f(Y)]$ for all bounded nondecreasing measurable $f: \Theta \rightarrow \R$). Then $X \preceq_{st} Y$ if and only if $X \preceq'_{st} Y$.
\end{lemma}

\begin{proof}[Proof of \Cref{lem:eqv}]
$(\impliedby)$ Suppose $X \preceq'_{st} Y$. Note that for any bounded monotone measurable $f: \R^N \rightarrow \R$, the restriction $f|_{\Theta}: \Theta \rightarrow \R$ is also a bounded monotone measurable function, and moreover
$\E[f(X)] = \E[f|_{\Theta}(X)]\leq \E[f|_{\Theta}(Y)] =  \E[f(Y)]$ since $X$, $Y$ are $\Theta$-valued and $X \preceq'_{st} Y$. So $X \preceq_{st} Y$.

$(\implies)$ Suppose $X \preceq_{st} Y$. To show $X \preceq'_{st} Y$, by  Theorem 1 of \citetapp{kamae1977stochastic}, it suffices to show that for any increasing set $B \subseteq \Theta$ closed in $\Theta$, we have $\E[\1_{X \in B}]\leq \E[\1_{Y \in B}]$ (we say a set $B$ is \textit{increasing} if $\1_{B}$ is a nondecreasing function).

Fix any such $B$. Let $B^{\uparrow}:= \{y \in \R^N: y \geq x, x \in B\}$
be the increasing hull of $B$ in $\R^N$. We claim that $B^{\uparrow}$ is closed in $\R^N$. To see this, fix any $y^n \rightarrow y$ in $\R^N$ where $y^n \in B^{\uparrow}$. Since $y^n \in B^{\uparrow}$, there exists $x^n \in B$ such that $y^n \geq x^n$. Since $B$ is a closed subset of a compact set $\Theta$, $B$ is compact. Therefore, there exists a subsequence $x^{n_l}$ converging to some $x \in B$. Passing to this subsequence, we have $\displaystyle y = \lim_{l\rightarrow \infty} y^{n_l} \geq \lim_{l \rightarrow \infty} x^{n_l} = x \in B$ and hence $y \in B^{\uparrow}$. This proves that $B^{\uparrow}$ is closed in $\R^N$, and hence measurable. 

Because $X$ is $\Theta$-valued, we have $\E[\1_{X\in B^{\uparrow}}] = \E[\1_{X\in B^{\uparrow} \cap \Theta }]$. We claim that $ B^{\uparrow} \cap \Theta = B$. Since $B \subseteq \Theta$ and $B \subseteq  B^{\uparrow}$, we have $ B \subseteq B^{\uparrow} \cap \Theta$. Now take any $y\in B^{\uparrow} \cap \Theta$. Then $y \in \Theta$ and there exists some $x \in B$ such that $y \geq x$. But because $B$ itself is an increasing set in $\Theta$, we must have $y \in B$. Thus $B^{\uparrow} \cap \Theta \subseteq B$. Therefore, $B^{\uparrow} \cap \Theta = B$. Now, we have 
\[\E[\1_{X\in B}]  =  \E[\1_{X\in B^{\uparrow} \cap \Theta }] = \E[\1_{X\in B^{\uparrow}}] \leq \E[\1_{Y\in B^{\uparrow}}] = \E[\1_{Y\in B^{\uparrow} \cap \Theta }] = \E[\1_{Y\in B}]  \]
where the inequality follows from that $X \preceq_{st} Y$ and $B^{\uparrow}$ is a measurable increasing set in $\R^N$. Since this holds for all closed increasing sets $B$ in $\Theta$, the claim follows.
\end{proof}

\vspace{0.5cm}

\begin{proof}[Proof of \Cref{lem:decomp}]

Let $\mathcal{B}_{\Theta^B}$ be the Borel $\sigma$-algebra of $\Theta^B$. Let $\kappa: \Theta^A \times \mathcal{B}_{\Theta^B} \rightarrow [0, 1]$ be the regular conditional distribution of $\theta^B$ given $\theta^A$. For any $t \in \Theta^A$, define the measure $P_t(\,\cdot\,) = \kappa(t, \,\cdot\,)$. Let $S$ be the support of $\theta^A$. By assumption, $\{P_t\}_{t \in S}$ is a stochastically nondecreasing family of probability measures on $\R^N$. By \Cref{lem:eqv}, it is also a stochastically nondecreasing family of probability measures on $\Theta^B$. 

Let 
\[\varphi(s) = \begin{cases}
\max \{t:\, t \leq s, \, t \in S\} &\text{ if $s \geq \min(S)$} \\
\min(S) &\text{otherwise} \\
\end{cases} \,.\]
Because $(-\infty, s] \cap S$ is compact, we have $\varphi(s) \in S$. For all $s \not \in S$, define 
$P_s = P_{\varphi(s)}$.
Because $\varphi(\,\cdot\,)$ is nondecreasing and $\varphi(s) = s$ for all $s \in S$, $\{P_t\}_{t \in \R}$ is a stochastically nondecreasing family of probability measures on $\Theta^B$. Invoking Theorem 6 of \citetapp{kamae1978stochastic}, we have a $\Theta^B$-valued stochastic process $\{X_t\}_{t\in \R}$ on a probability space $(\Omega, \nu)$ such that (i) $X_s(\omega) \leq X_t(\omega)$ for all $s < t$ and all $\omega$ and (ii) $P_t$ is the distribution of $X_t$ for all $t$. 

By the proof of Theorem 6 in \citetapp{kamae1978stochastic}, there exists a dense countable set $D \subset \R$ such that for all $\omega$ and all $s \not \in D$
\[X_s(\omega) = \lim_{t \rightarrow s; \, t \in D, \,t \leq s} X_t(\omega)\,. \]
Let $X^i_t$ denote the $i$-th coordinate of $X_t$. We claim that for all $i$ and all $\omega$, the sample path $X^i_t(\omega)$ is left-continuous at all $t \not \in D$. To see this, fix any $i$, $\omega \in \Omega$, $t \not \in D$, and $\epsilon > 0$. By construction, $X^i_{t}(\omega) = \displaystyle \lim_{k \rightarrow \infty} X^i_{t_k}(\omega)$ for some sequence $t_k \uparrow t$, with $t_k \in D$. So there exists some $K \in \N$ such that $X^i_t(\omega) - X^i_{t_{K}}(\omega)<\epsilon$. But then for any $s \in (t - \delta, t)$ where $\delta := t - t_{K}$, we have $|X^i_t(\omega) - X^i_s(\omega)| \leq X^i_t(\omega) - X^i_{t_{K}}(\omega) < \epsilon$ by monotonicity of $X^i_t(\omega)$. 

Because $D$ is countable, $D^c$ is dense in $\R$. Pick any dense countable set $Q \subset D^c$. For all $\omega$, define $\bar{X}_t(\omega) = X_t(\omega)$ for all $t \in D^c$ and  
\[\bar{X}_t(\omega) =  \lim_{s \rightarrow t; \, s \in Q, \, s \leq t} \bar{X}_s(\omega)\]
for all $t \in D$. Note that $\{\bar{X}_t\}_{t\in\R}$ is also a nondecreasing stochastic process. By a similar argument as above, $\bar{X}^i_t(\omega)$ is left-continuous at all $t \in D$. Moreover, for any $t \in D^c$, and any sequence $t_k \uparrow t$, with $t_k \in Q$, we have  $\bar{X}^i_t(\omega) = X^i_t(\omega) = \displaystyle \lim_{k \rightarrow \infty} X^i_{t_k}(\omega) = \displaystyle \lim_{k \rightarrow \infty} \bar{X}^i_{t_k}(\omega)$ by left continuity of $X^i_t(\omega)$ at $t \in D^c$. Therefore, by a similar argument as above, $\bar{X}^i_t(\omega)$ is also left-continuous at all $t \in D^c$. So $\{\bar{X}_t\}_{t\in\R}$ is a left-continuous stochastic process, and thus the map $(t, \omega) \mapsto \bar{X}^i_t(\omega)$ is jointly measurable (see e.g. \citealtapp{Karatzas1998}, p. 5). Since $D$ is countable, $(t, \omega) \mapsto X^i_t(\omega)\1_{t \in D}$ is also jointly measurable. Then, because $X^i_t(\omega) = \bar{X}^i_t(\omega)\1_{t \not \in D} +  X^i_t(\omega)\1_{t \in D}$, $(t, \omega) \mapsto X^i_t(\omega)$ is jointly measurable. Therefore, $(t, \omega) \mapsto X_t(\omega)$ is jointly measurable. 

Let $\mathcal{E} = \Omega$ and $h(\theta^A; \varepsilon) = X_{\theta^A}(\varepsilon)$ for all $\theta^A \in \Theta^A$ and $\varepsilon \in \mathcal{E}$. Then $h:\Theta^A \times \mathcal{E} \rightarrow \Theta^B$ is jointly measurable and nondecreasing in the first argument. Let $\mu^A$ denote the marginal distribution of $\theta^A$. By the construction of $h$, for any $(a, b) \in \R \times \R^N$, we have 
\begin{align*}
    (\mu^A \times \nu)(\{(\theta^A, \varepsilon): \theta^A \leq a,\, h(\theta^A; \varepsilon) \leq b\}) &= \int \1_{\theta^A \leq a} \1_{\theta^A \in S} \Big(\int\1_{h(\theta^A; \varepsilon)\leq b} \d\nu(\varepsilon) \Big)\d\mu^A(\theta^A)\\
    &=  \int \1_{\theta^A \leq a} \1_{\theta^A \in S} \kappa (\theta^A, \,\{\theta^B: \theta^B \leq b\} )\d\mu^A(\theta^A)\\
    &= \int \1_{\theta^A \leq a} \kappa(\theta^A, \,\{\theta^B: \theta^B \leq b\}) \d\mu^A(\theta^A) \\
    & = \P(\theta^A \leq a, \,\theta^B \leq b)\,,
\end{align*}
where we have also used  $\mu^A(S) = 1$ and Fubini's theorem. Thus, $\theta \eqid (\theta^A, h(\theta^A, \varepsilon))$ when $\theta^A, \, \varepsilon$ are independently drawn from $\mu^A, \, \nu$ respectively. 
\end{proof}

\subsubsection{Stochastic Monotonicity and Affiliation}
In this appendix, we show that affiliation implies stochastic monotonicity (this is standard; we include it here for completeness).

\begin{lemma}\label{lem:aff}
Suppose $(\theta^A, \theta^B)$ is a jointly affiliated random variable (where $\theta^A$ is $\R$-valued and $\theta^B$ is $\R^N$-valued). Then, $\theta^B$ is stochastically nondecreasing in $\theta^A$. 
\end{lemma}
\begin{proof}[Proof of \Cref{lem:aff}]
By Theorem 3.10.16 in  \citetapp{muller2002comparison}, affiliation implies the ``conditionally increasing'' property (CI) defined in Definition 3.10.9 in \citetapp{muller2002comparison}. Moreover, by Theorem 5.3 of \citetapp*{block1985concept}, CI implies the ``positively dependent through stochastic ordering'' property (PDS) which is also defined in Definition 3.10.9 in \citetapp{muller2002comparison}. Note that $(\theta^A, \theta^B)$ satisfying PDS clearly implies that $\theta^B$ is stochastically nondecreasing in $\theta^A$, proving the claim. 
\end{proof}

\setlength\bibsep{10pt}
\bibliographystyleapp{ecta} 
\bibliographyapp{references}
\end{document}